\newtheorem{theorem}{Theorem}
\newtheorem{proposition}[theorem]{Proposition}
\newenvironment{proof}[1][Proof]{\noindent\textbf{#1.} }{\ \rule{0.5em}{0.5em}}
\begin{document}
\title{BSW phenomenon for near-fine-tuned particles with external force: general
classification of scenarios}
\author{H.V.Ovcharenko}
\affiliation{Department of Physics, V.N.Karazin Kharkov National University, 61022 Kharkov, Ukraine}
\affiliation{Institute of Theoretical Physics, Faculty of Mathematics and Physics, Charles
University, Prague, V Holesovickach 2, 180 00 Praha 8, Czech Republic}
\author{O.B.Zaslavskii}
\affiliation{Department of Physics and Technology, Kharkov V.N. Karazin National
University, 4 Svoboda Square, Kharkov 61022, Ukraine}

\pacs{PACS number}

\begin{abstract}
If two particles moving towards a black hole collide in the vicinity of the
horizon, the energy $E_{c.m.}$ in the center of mass frame can grow
indefinitely if one of particles is fine-tuned. This is the Ba\~{n}ados, Silk
and West (BSW) effect. One of objections against this effect consists in that
for some types of a horizon fine-tuned particles cannot reach the horizon.
However, this difficulty can be overcome if instead of exact fine-tuning, one
of particle is nearly fine-tuned, with the value of small detuning being
adjusted to the distance to the horizon. Such particles are called
near-fine-tuned. We give classification of such particles and describe
possible high energy scenarios of collision in which they participate. We
analyze the ranges of possible motion for each type of particle and determine
under which condition such particles can reach the horizon. We analyze
collision energy $E_{c.m.}\,$and determine under which conditions it may grow
indefinitely. We also include into consideration the forces acting on
particles and find when the BSW effect with nearly-fine-tuned particles is
possible with finite forces. We demonstrate that the BSW effect with particles
under discussion is consistent with the principle of kinematic censorship.
According to this principle, $E_{c.m.}\,$cannot be literally infinite in any
event of collision (if no singularity is present), although it can be made as
large as one likes.

\end{abstract}
\maketitle
\tableofcontents

\section{Introduction}

At present, high energy particle collisions near black \ holes (and, more
generally, collisions in a strong gravitational field) remain hot topic. This
is mainly due to findings of Ba\~{n}ados, Silk and West who noticed that if
two particle move towards an extremal black hole and collide near its horizon,
under certain conditions the energy in the center of mass frame $E_{c.m.}$
becomes unbounded \cite{ban}. This happens if one of colliding particles has
to be critical (meaning that its radial velocity has to vanish on the
horizon). This is what is called the BSW effect. It also revived interest to
earlier works on this subject \cite{pir1} - \cite{pir3}. The aforementioned
conditions imply that one of colliding particles has fine-tuned parameters
(say, special relation between the energy and angular momentum).

A number of objections was pushed forward against this effect \cite{berti},
\cite{ted}, \cite{hod}, \cite{lib}. Their meaning can be reduced to the
statements that there are some factors that bound the energy of collision
$E_{c.m.}$ However, now it is clear that such a kind of objections does not
abolish the BSW effect. Moreover, according to the principle of kinematic
censorship, it is impossible to have literally infinite $E_{c.m.}$ in each
event of collision. Instead, this quantity remains finite but can be made as
large as one likes \cite{cens}. Therefore, the aforementioned objections
simply put limits of validity of the BSW effect but do not abolish it.

Moreover, it turned out that the requirement of having an extremal horizon is
not necessary for the BSW effect. One of factors that prevents infinite
$E_{c.m.}$ is correlation between the type of a trajectory of a fine-tuned
particle and a type of a horizon. For example, the critical particle cannot
reach the horizon of a nonextremal black hole. However, if one somewhat
relaxes the condition of criticality and replaces it by near-criticality with
certain relationship between detuning\ and proximity to the horizon, the\ BSW
effect becomes possible \cite{gr-pav}.

Another objection against the BSW effect is related to backreaction of
radiation on a particle. Meanwhile, it was shown that the BSW effect survives
under the action of a force for extremal \cite{tz13} and nonextremal horizons
\cite{tz14}.

A large number of particular results for the BSW effect invokes necessity of
constructing the most general scheme that would encompass all possible cases.
In the previous paper \cite{ov-zas-prd}, such a scheme with full
classification of scenarios leading\ to the BSW effect, was developed for
collisions in which fine-tuned particles participate. In the present paper, we
developed a corresponding scheme for near-fine-tuned particles thus
essentially generalizing the observation made in \cite{gr-pav}. In doing so,
we also take into account a force acting on particles, so in general they are
not free-falling. An important reservation to be mentioned is that we consider
motion within the equatorial plane of rotating axially symmetric black holes.
Another reservation consists in that we work in the test particle
approximation neglecting backreaction of particles on the metric.

The paper is organized as follows. In Section II we give general setup for
motion of particles in axially symmetric spacetimes. In Section III we
introduce different types of near-fine-tuned particle and analyze possible
ranges of their motion. In Section IV we focus on kinematical properties of
near-fine-tuned particles for different ranges of their motion that becomes
important in Section V in which we investigate behavior of energy in the
center of mass frame of two colliding particles. In section VI we give general
expressions for an acceleration experienced by near-fine-tuned particles. This
becomes useful in Sections VII, VIII and IX where we analyze near-horizon
behavior of acceleration for different ranges of motion of near-fine-tuned
particles. In the Section X we briefly formulate the results we obtained in
previous sections. In Section XI we check the validity of the aforementioned
principle of kinematic censorship. Section XII is devoted to a possibility of
varying of ranges of particle motion under the action of the external force.
In Section XIII we summarize corresponding results of our work.

\section{General setup\label{set}}

In this work, we are going to analyze the properties of the BSW phenomenon for
near-fine-tuned particles. At first, we need to define what we mean by this
term. We are investigating the motion of particles in the background of a
rotating black hole which is described in the generalized Boyer-Lindquist
coordinates $(t,r,\theta,\varphi)$ by the metric:%
\begin{equation}
ds^{2}=-N^{2}dt^{2}+g_{\varphi\varphi}(dt-\omega d\varphi)^{2}+\frac{dr^{2}%
}{A}+g_{\theta\theta}d\theta^{2},
\end{equation}
where all metric coefficients do not depend on $t$ and $\varphi$. The horizon
is located at $r=r_{h}$ where $A(r_{h})=N(r_{h})=0$. Near the horizon, we
utilize a general expansion for the functions $N^{2}$, $A$ and $\omega$:%
\begin{equation}
N^{2}=\kappa_{p}v^{p}+o(v^{p}),\text{ \ \ }A=A_{q}v^{q}+o(v^{q}),
\label{an_exp}%
\end{equation}%
\begin{equation}
\omega=\omega_{H}+\omega_{k}v^{k}+o(v^{k}), \label{om_exp}%
\end{equation}
where $q,p$ and $k$ are numbers that characterize the rate of a change of the
metric functions near the horizon, and $v=r-r_{h}.$

Now, let us investigate the motion of a particle in such a space-time. If a
particle is freely moving, the space-time symmetries with respect to
$\partial_{t}$ and $\partial_{\varphi}$ impose conservation of the
corresponding components of the four-momentum: $mu_{t}=-E$, $mu_{\varphi}=L$.
We assume the symmetry with respect to the equatorial plane. In what follows,
we restrict ourselves by equatorial motion. This allows us to write the
4-velocity of a free-falling particle in the following form:%
\begin{equation}
u^{\mu}=\left(  \frac{\mathcal{X}}{N^{2}},\sigma\frac{\sqrt{A}}{N}%
P,0,\frac{\mathcal{L}}{g_{\varphi\varphi}}+\frac{\omega\mathcal{X}}{N^{2}%
}\right)  , \label{4_vel}%
\end{equation}
where $\sigma=\pm1$, $\mathcal{X}=\epsilon-\omega\mathcal{L}$, $\epsilon=E/m$,
$\mathcal{L}=L/m$ and $P$ is given by:%
\begin{equation}
P=\sqrt{\mathcal{X}^{2}-N^{2}\left(  1+\frac{\mathcal{L}^{2}}{g_{\varphi
\varphi}}\right)  }. \label{P}%
\end{equation}

Now, let a particle move non-freely. In the case of an external force acting
on the particles, the quantities $\varepsilon$ and $\mathcal{L}$ are obviously
not conserved. However, despite this fact, we can still use the expression
(\ref{4_vel}) but with general functions $\mathcal{X(}r\mathcal{)}$ and
$\mathcal{L(}r\mathcal{)}$, provided forces do not depend on time. Near the
horizon, we can use the Taylor expansion for them:%
\begin{equation}
\mathcal{X}=X_{s}v^{s}+o(v^{s}),\text{ \ \ }\mathcal{L=}L_{H}+L_{b}%
v^{b}+o(v^{b}). \label{L_exp}%
\end{equation}

In further analysis, we will also require expressions for the tetrad
components of the 4-velocity. To this end, let us introduce the corresponding
tetrad:%
\begin{align}
e_{\mu}^{(0)}  &  =N\left(  1,0,0,0\right)  \},\text{ \ \ }e_{\mu}%
^{(1)}=\left(  0,\frac{1}{\sqrt{A}},0,0\right)  ,\label{tetr_1}\\
e_{\mu}^{(2)}  &  =\sqrt{g_{\theta\theta}}\left(  0,0,1,0\right)  ,\text{
\ \ }e_{\mu}^{(3)}=\sqrt{g_{\varphi\varphi}}\left(  -\omega,0,0,1\right)  .
\label{tetr_2}%
\end{align}

The tetrad components of 4-velocity read%
\begin{equation}
u^{(a)}=\left(  \frac{\mathcal{X}}{N},\frac{\sigma}{N}P,0,\frac{\mathcal{L}%
}{\sqrt{g_{\varphi\varphi}}}\right)  .
\end{equation}

\section{Classification of different near-fine-tuned
particles\label{sec_class_of_part}}

In this section, we are going to define and classify different near-fine-tuned
particles that will be useful for further analysis. The definition of
near-fine-tuned particles generalizing that introduced in previous works (see
\cite{gr-pav}) is as follows: a particle is called near-fine-tuned if
$\mathcal{X}$ near the horizon has the Taylor expansion:%
\begin{equation}
\mathcal{X=}\delta+X_{s}v^{s}+o(v^{s}),\text{ \ \ }s>0, \label{X_us_2}%
\end{equation}
where $\delta\ll1$ is a dimensionless parameter. It is important to note that
a particle for which the expansion for $\mathcal{X}$ starts with a constant is
generally called usual (and we will see several analogies with usual particles
in our further analysis). However, we are going to show that the case
$\delta\ll1$ requires a distinct analysis.

Before we proceed further, we have to note that in our analysis we will also
require that the time coordinate during the motion of the particle has to
increase (this is the so-called forward-in-time condition). To this end, the
time component of the 4-velocity has to be positive: $u^{t}>0$. From the
expression (\ref{4_vel}), one sees that this requires $\mathcal{X}\geq0$. In
further analysis, we will require this condition to hold for all particles
under consideration.

To analyze the behavior of the four-velocity, let us introduce a
classification of different types of particles based on different values of
$s$ (see Table \ref{class_tab}).

\begin{table}[ptb]%
\begin{tabular}
[c]{|c|c|c|}\hline
Condition & Type of particle with non-zero $\delta\ll1$ & Abbreviation\\\hline
$s<p/2$ & Near-subcritical & NSC\\\hline
$s=p/2$ & Near-critical & NC\\\hline
$s=p/2$ and (\ref{ultra_crit_prop}) & Near-ultracritical & NUC\\\hline
$s>p/2$ & Near-overcritical & NOC\\\hline
\end{tabular}
\caption{ Table showing classification of different near-fine-tuned particles}%
\label{class_tab}%
\end{table}

In further analysis, we will use abbreviations: "NSC" for near-subcritical,
"NC" for near-critical, "NUC" for near-ultracritical, and "NOC" for
near-overcritical particles.

As we will show, these particles will correspond to generalization, for
non-zero $\delta$, of subcritical, critical, and ultracritical particles
introduced in Ref. \cite{ov-zas-prd}. The only new type of particles is the
near-overcritical one that does not have any analog for $\delta=0$. But before
doing this, let us consider an expression for the radial component of the
four-velocity. As we will show, it depends strongly on a type of particle that
will justify the necessity of introduced clasiification. For this purpose and
for further investigation, let us consider the quantity $P$ near the horizon.
First of all, we note that exactly on the horizon, $P=|\delta|$ (this becomes
obvious if one substitutes equations (\ref{X_us_2}) and (\ref{an_exp}) into
(\ref{P}) and takes the limit $v\rightarrow0$). However, as one moves away
from the horizon, the quantity $P$ starts to differ from the value
$P=|\delta|$. Depending on the parameters of a particle, $P$ may either
decrease or increase. In the first case, at some radial distance $v_{t}$,
where the index "t" stands for "turning point", $P$ becomes zero: $P(v_{t})=0$
(hereafter, "distance" means "coordinate distance"). In the second case, there
are no turning points. However, we can still define effective coordinate
distances $v_{e}$, where the index "e" stands for "effective", at which $P$
changes by values comparable to the value of $P$ on the horizon. Formally, we
can define effective distances to be such that $P(v_{e})-P(0)\sim\delta$. As
we will show in further analysis, the physical properties of the collisional
process depend strongly on the point at which this process takes place and the
relationship between it and $v_{t}$ (or $v_{e}$).

Now, let us find under which conditions $P$ imposes the existence of roots and
in what regions a particle may move. Using the expression for $P$ (see
(\ref{P})), we have:%
\begin{equation}
P=\sqrt{\mathcal{X}^{2}-N^{2}\left(  1+\frac{\mathcal{L}^{2}}{g_{\varphi
\varphi}}\right)  }. \label{P2}%
\end{equation}

Reality of this expression is defined by condition
\begin{equation}
\mathcal{X}^{2}\geq N^{2}\left(  1+\frac{\mathcal{L}^{2}}{g_{\varphi\varphi}%
}\right)  . \label{p=0_eqq}%
\end{equation}

As we impose forward-in-time condition, $\mathcal{X}>0$. Also, by definition
$N^{2}$ is non-negative. Thus we can formally take "square root" of
(\ref{p=0_eqq}) and get:%
\begin{equation}
\mathcal{X\geq}N\sqrt{1+\frac{\mathcal{L}^{2}}{g_{\varphi\varphi}}}.
\end{equation}

Our task is to find such ranges of radial coordinate in which this condition
holds. To do this, let us at first solve equation $\mathcal{X=}N\sqrt
{1+\frac{\mathcal{L}^{2}}{g_{\varphi\varphi}}}.$ Substituting (\ref{X_us_2})
and (\ref{an_exp}) one gets
\begin{equation}
\delta+X_{s}v_{t}^{s}+o(v_{t}^{s})=\sqrt{\kappa_{p}\left(  1+\frac
{\mathcal{L}^{2}}{g_{\varphi\varphi}}\right)  }v_{t}^{p/2}+o(v_{t}^{p/2}).
\label{p=0_eqqq}%
\end{equation}
\qquad\qquad

We cannot find general expressions for all roots, but we can determine if
there are any new roots that were absent in the case of zero $\delta$. To
analyze the corresponding solutions, we consider different types of particles separately.

\subsection{Near-subcritical particles ($s<p/2$).}

In this case, it is obvious that the term $v^{s}$ is dominant over the term
$v^{p/2}$ that gives us $P\approx|\mathcal{X}|$. Therefore, if the
forward-in-time condition is satisfied, the quantity $P$ is real. Thus, the
entire analysis of regions of motion for near-subcritical particles is
restricted to an analysis of the positivity of $\mathcal{X}$. To analyze
regions where $\mathcal{X}\geq0$, we first need to find where $\mathcal{X=}0$.
By substituting (\ref{X_us_2}), we have:
\begin{equation}
\delta+X_{s}v_{t}^{s}=o(v_{t}^{s}),
\end{equation}
that gives
\begin{equation}
v_{t}=\left(  -\frac{\delta}{X_{s}}\right)  ^{1/s}+o(\delta^{1/s}).
\label{vt_subcr}%
\end{equation}

Note that this solution is only possible if $\delta>0$ and $X_{s}<0$ or if
$\delta<0$ and $X_{s}>0$. In the first case, it can be easily seen that
$\mathcal{X}$ is non-negative only in the range $[0,v_{t}]$, thus the particle
can only move from the horizon to $v_{t}$, and this region is not connected to
infinity. On the other hand, in the second case, $\mathcal{X}$ is non-negative
only in the range $[v_{t},\infty]$. In this section, when we write $\infty$,
it means that we cannot find other roots that limit the motion of the
particle. This implies that there are no roots generated by a non-zero
$\delta$ that restricts the motion of a particle (although some of them may
exist at distances greater than $v_{t}$, their existence is not defined by
$\delta$).

However, if the conditions $\delta>0$ and $X_{s}<0$, or $\delta<0$ and
$X_{s}>0$, do not hold, then there is no $\delta$-related root and a particle
may move in the range $[0,\infty]$. The forward-in-time condition for the
absence of roots holds only in the case when $\delta>0$ and $X_{s}>0$ (while
for the case $\delta<0$ and $X_{s}<0$, it does not hold).

As mentioned above, in the case of the absence of roots, we can only define
effective distances at which $P$ changes on the order of $\delta$. For
near-subcritical particles, where $s<p/2$, we have $P\approx\mathcal{X}%
\approx\delta+X_{s}v^{s}$. By comparing the two terms in the expansion of $P$,
we get:%
\begin{equation}
v_{e}\sim\delta^{1/s}. \label{vc_subcr}%
\end{equation}

To summarize, we have:%
\begin{equation}
\text{If particle is NSC (}s<p/2\text{), }v_{c}\in\text{ }\left\{  \text{%
\begin{tabular}
[c]{l}%
$\lbrack0,v_{t}]$ where $v_{t}=\left(  -\frac{\delta}{X_{s}}\right)  ^{1/s}$
if $\delta>0$ and $X_{s}<0,$\\
$\lbrack v_{t},\infty]$ where $v_{t}=\left(  -\frac{\delta}{X_{s}}\right)
^{1/s}$ if $\delta<0$ and $X_{s}>0,$\\
$\lbrack0,\infty]$ if $\delta>0$ and $X_{s}>0,$ in this case $v_{e}\sim
\delta^{1/s}.$%
\end{tabular}
}\right.  \label{fit_cond}%
\end{equation}

In this case, the ranges of motion were limited only by the forward-in-time
condition. Therefore, (\ref{fit_cond}) describes all cases when the
forward-in-time condition holds, for general expressions for $P$. This fact
will be used in further analysis.

We also have to make some reservation about ranges of motion of particles. Up
to now, we have not considered directions of their motion: ranges where
$\mathcal{X}>0$ and $P$ is real do not depend on whether particle is ingoing
or outgoing. However, it is obvious that in application to the real problem
this becomes important. If particle has the finite proper time to achieve
horizon from some finite distance, then its motion cannot be reversed, and, in
application to our problem, this describes only ingoing particle. (It can be
outgoing if one considers a white hole, but we will not focus on this
possibility). If the proper time diverges near horizon, a particle may move
not only towards a horizon but also from its small vicinity in the outward
direction, so our analysis is applicable both to ingoing and outgoing
particles. Analysis of the proper time was done in Appendix
\ref{sec_prop_time}, where we analyze behaviour of proper time for different
types of particles. One can see that for different types of particles the
behaviour of the proper time is different. Meanwhile, we still can conclude
that if $q<2,$ any particle has finite proper time, and in this case our
current analysis is applicable only for ingoing particles.

\subsection{Near-critical particles ($s=p/2$).}

In this case $v^{s}$ and $v^{p/2}$ terms are of the same order and eq.
(\ref{p=0_eqqq}) simplifies (we denote $A_{p/2}=\sqrt{\kappa_{p}\left(
1+\frac{\mathcal{L}^{2}}{g_{\varphi\varphi}}\right)  }$):%
\begin{equation}
\delta\approx\left(  A_{p/2}-X_{p/2}\right)  v_{t}^{p/2}.
\end{equation}

Solving this, we have:%
\begin{equation}
v_{t}\approx\left(  \frac{\delta}{A_{p/2}-X_{p/2}}\right)  ^{2/p}.
\label{vt_cr}%
\end{equation}

This solution is possible only in two cases: if $\delta>0$ and $A_{p/2}%
>X_{p/2}$, or if $\delta<0$ and $A_{p/2}<X_{p/2}$. In the first case, $P$ is
positive in the range $[0,v_{t}]$, while in the second case it is positive in
the range $[v_{t},\infty]$. In other cases, there is no solution and the
particle may move in the range $[0,\infty]$. In this case, we can only define
effective distances at which $P$ changes on the values of the order of
$\delta$. To find these distances, we write the expression for $P$ in the case
of NC particles.%
\begin{equation}
P\approx\sqrt{\delta^{2}+2\delta X_{p/2}v^{p/2}+\left(  X_{p/2}^{2}%
-A_{p/2}^{2}\right)  v^{p}}.
\end{equation}

Generally, we need to determine at which distances each term in the expression
for $P$ is compatible with $\delta^{2}$ and choose the dominant solution in
$\delta$. We will not provide a general analysis and will just present a
result: the effective distance $v_{e}$ is such that all terms are of the same
order. In this case,%
\begin{equation}
v_{e}\sim\delta^{2/p}. \label{vc_cr}%
\end{equation}

We need to determine, whether the forward-in-time condition holds in regions
of motion for the NC particle. First, let us consider the case where
$\delta>0$ and $A_{p/2}>X_{p/2}$. Then, the root (\ref{vt_cr}) is closer to
the horizon than (\ref{vt_subcr}) (this is true because $A_{p/2}>0$).
Therefore, in the entire range $[0,v_{t}]$, the forward-in-time condition holds.

Next, let us consider the case where $\delta<0$ and $A_{p/2}<X_{p/2}$. In this
case, the root (\ref{vt_cr}) is further from the horizon than (\ref{vt_subcr}%
). Therefore, in the entire range $[v_{t},\infty]$, the forward-in-time
condition holds.

The remaining cases are (i) $\delta>0$ and $A_{p/2}<X_{p/2}$ or (ii)
$\delta<0$ and $A_{p/2}>X_{p/2}$. In case (i), the forward-in-time condition
holds according to the third case in (\ref{fit_cond}). In case (ii), it only
holds for $X_{p/2}>0$ and for $v\in\lbrack(-\frac{\delta}{X_{p/2}}%
)^{2/p},\infty]$. In all other cases, motion is forbidden.

In summary, generalizing all the above facts, we have:%
\begin{equation}
\text{If particle is NC (}s=p/2\text{), }v_{c}\in\left\{  \text{
\begin{tabular}
[c]{l}%
$\lbrack0,v_{t}]$ where $v_{t}$ is in (\ref{vt_cr}) and $\delta>0$ and
$A_{p/2}>X_{p/2},$\\
$\lbrack v_{t},\infty]$ where $v_{t}$ is in (\ref{vt_cr}) and $\delta<0$ and
$A_{p/2}<X_{p/2},$\\
$\lbrack0,\infty]$ if $\delta>0$ and $A_{p/2}<X_{p/2},$ in this case
$v_{e}\sim\delta^{2/p},$\\
$\lbrack(-\frac{\delta}{X_{p/2}})^{2/p},0]$ if $\delta<0$ and $A_{p/2}%
>X_{p/2}>0,$ $v_{e}\sim\delta^{2/p}.$%
\end{tabular}
}\right.  \label{vt_nc}%
\end{equation}

\subsection{Near-ultracritical particles ($s=p/2$ and special condition)}

It may appear that coefficients in expansions for $P$ may be such that several
first terms in it cancel. This happens in a case when%
\begin{align}
\left(  \mathcal{X-}\delta\right)  ^{2}-N^{2}\left(  1+\frac{\mathcal{L}^{2}%
}{g_{\varphi\varphi}}\right)   &  =(X_{s}v^{s}+o(v^{s}))^{2}-\left(
\kappa_{p}\left(  1+\frac{L_{H}^{2}}{g_{\varphi H}}\right)  v^{p}%
+o(v^{c})\right)  =\label{ultra_crit_prop}\\
&  =\frac{\kappa_{p}}{A_{p}}(u^{r})_{c}^{2}v^{2c+p-q}+o(v^{2c+p-q})\text{,}%
\end{align}
where $(u^{r})_{c}$ and $c>q/2$ are some constants. It is important to note
that this cancellation can only occur if $s=p/2$. Additionally, this condition
does not involve $\delta$ and is the same as eq. (28) from \cite{ov-zas-prd},
which is a defining property for ultracritical particles when $\delta=0$. The
unusual notation and choice of parameters $(u^{r})_{c}$ and $c$ were made to
simplify the expression for the four-velocity in the case of $\delta=0$. In
this case,%
\begin{equation}
u^{r}=\frac{\sqrt{A}}{N}P\approx(u^{r})_{c}v^{c}.
\end{equation}

However, if $\delta\neq0,$ we get from an equation for $P:$%
\begin{align}
P(v_{e})  &  \approx\sqrt{(\delta+X_{p/2}v_{e}^{p/2})^{2}-\left(  \kappa
_{p}\left(  1+\frac{L_{H}^{2}}{g_{\varphi H}}\right)  v_{e}^{p}\right)
}=\nonumber\\
&  =\sqrt{\delta^{2}+2\delta(X_{p/2}v_{e}^{p/2})+(X_{p/2}v_{e}^{p/2}%
)^{2}-\left(  \kappa_{p}\left(  1+\frac{L_{H}^{2}}{g_{\varphi H}}\right)
v_{e}^{p}\right)  }.
\end{align}

Using (\ref{ultra_crit_prop}) we can write:%
\begin{equation}
P(v_{e})\approx\sqrt{\delta^{2}+2\delta X_{p/2}v^{p/2}+\frac{\kappa_{p}}%
{A_{p}}(u^{r})_{c}^{2}v^{2c+p-q}}=0. \label{p=0_ultracr}%
\end{equation}

Generally, there may be 3 roots: the 1-st is obtained by comparison of the
1-st and 2-nd terms in (\ref{p=0_ultracr}) and is given by
\begin{equation}
v_{t}\approx\left(  -\frac{\delta}{2X_{p/2}}\right)  ^{2/p}.
\label{vt_ultracr}%
\end{equation}

One can easily check that this is the only possible root in (\ref{p=0_ultracr}%
). This root exists in the same cases as the root for a near-subcritical
particle: it exists if $\delta>0$ and $X_{s}<0$, or if $\delta<0$ and
$X_{s}>0$. From (\ref{p=0_ultracr}) we can see that in both these cases a
particle can only move in the range $[0,v_{t}]$. As for all parameters,
(\ref{vt_ultracr}) is closer to the horizon than (\ref{vt_subcr}), we see that
the forward-in-time condition holds only if $\delta>0$ and $X_{s}<0$. (In the
case where $\delta<0$ and $X_{s}>0$, the forward-in-time condition holds for
$v\in\left[  \left(  -\frac{\delta}{X_{s}}\right)  ^{1/s},\infty\right]  $,
while $P$ is real only for $v\in\left[  0,\left(  -\frac{\delta}{2X_{s}%
}\right)  ^{1/s}\right]  $. These regions obviously do not intersect.)

The only remaining cases are $\delta>0$ and $X_{s}>0$, or $\delta<0$ and
$X_{s}<0$. In the first case, both $P$ and $\mathcal{X}$ are positive for all
positions of the particle, thus there are no turning points. In the second
case, $\mathcal{X}$ is negative, which makes this case impossible. Therefore,
we are left with the only case $\delta>0$ and $X_{s}>0$, for which we only
need to find the effective distances at which $P$ changes on the values of the
order $\delta$. Analyzing (\ref{p=0_ultracr}), we see that the dominant
behavior can be obtained by comparing the first and second terms that gives us%
\begin{equation}
v_{e}\sim\delta^{2/p}. \label{vc_ultracr}%
\end{equation}

Summarizing we have:%
\begin{equation}
\text{If particle is NUC (}s<p/2\text{), }v_{c}\in\text{ }\left\{  \text{%
\begin{tabular}
[c]{l}%
$\lbrack0,v_{t}]$ where $v_{t}=\left(  -\frac{\delta}{2X_{s}}\right)  ^{1/s}$
if $\delta>0$ and $X_{s}<0,$\\
$\lbrack0,\infty]$ if $\delta>0$ and $X_{s}>0,$ in this case $v_{e}\sim
\delta^{2/p}.$%
\end{tabular}
}\right.  \label{vt_nuc}%
\end{equation}

\subsection{Near-overcritical particles ($s>p/2$).\label{sec_vt_noc}}

For such particles $v^{p/2}$ term is dominant over $v^{s}$ and (\ref{p=0_eqqq}%
) becomes:%
\begin{equation}
\delta=A_{p/2}v_{t}^{p/2}+o(v_{t}^{p/2}),
\end{equation}
where $A_{p/2}=\sqrt{\kappa_{p}\left(  1+\frac{\mathcal{L}^{2}}{g_{\varphi
\varphi}}\right)  }.$ Solving this equation, we have%
\begin{equation}
v_{t}\approx\left(  \frac{\delta^{2}}{A_{p/2}^{2}}\right)  ^{1/p}%
+o(\delta^{2/p}). \label{vt_overcr}%
\end{equation}

Note that this root of equation $P=0$ exists for all $\delta$ independently of
the forward-in-time condition. In this case, $P$ is given by $\sqrt{\delta
^{2}-A_{p/2}^{2}v^{p}}$ in dominant orders, and solving the equation $P=0$
yields (\ref{vt_overcr}). Additionally, one may observe that $P^{2}$ is
non-negative for $[0,v_{t}]$ regardless of the sign of $\delta$. Now, let us
determine the conditions under which the forward-in-time condition holds in
this range. At first, we note that (\ref{vt_overcr}) is $\sim\delta^{2/p}$,
while the root of the equation $\mathcal{X=}0$, similarly to the case of
near-subcritical particles (\ref{vt_subcr}), is $\sim\delta^{1/s}$. Since for
near-overcritical particles $s>p/2$, (\ref{vt_overcr}) is closer to the
horizon than (\ref{vt_subcr}) (it is of lower order in $\delta$ than
(\ref{vt_subcr})). Thus, in the case of $\delta>0$ and $\mathcal{X}_{s}<0$
(see the first case in (\ref{fit_cond})), the forward-in-time condition holds
throughout the range $\left[  0,\left(  \frac{\delta^{2}}{A_{p/2}^{2}}\right)
^{1/p}\right]  $. If $\ \delta<0$ and $\mathcal{X}_{s}>0$ (see the second case
in (\ref{fit_cond})), the regions of reality of $P$ and positivity of
$\mathcal{X}$ do not intersect, so motion in this case is forbidden. The last
case is $\delta>0$ and $X_{s}>0$ (see the third case in (\ref{fit_cond})). In
this case, the forward-in-time condition holds for all points, and thus the
positivity of $\mathcal{X}$ does not bound regions of particle motion.

In summary, we have:
\begin{equation}
\text{If particle is NOC (}s=p/2\text{), }v_{c}\in\lbrack0,v_{t}]\text{ where
}v_{t}\approx\left(  \frac{\delta^{2}}{A_{p/2}^{2}}\right)  ^{1/p}\text{ if
}\delta>0. \label{vc_overcr}%
\end{equation}

This concludes the analysis of different regions of motion for different types
of particles. It is easy to see that the exact expressions for turning points
$v_{t}$ or effective distances $v_{e}$ are different for each type of
particle. However, for near-critical, near-ultracritical, and
near-overcritical particles, they are of the same order in $\delta$ (see
(\ref{vc_cr}), (\ref{vc_ultracr}), and (\ref{vc_overcr})). Therefore, in this
sense, they may appear to be indistinguishable. To understand the reason why
this classification is still necessary, let us consider the radial component
of the four-velocity in the limit $\delta\rightarrow0$. In this case,
$P\approx\sqrt{(X_{s}v^{s})^{2}-\left(  \kappa_{p}\left(  1+\frac{L_{H}^{2}%
}{g_{\varphi H}}\right)  v^{p}\right)  }$. By using the fact that for
near-subcritical particles $s<p/2$ (and thus the $v^{2s}$ term is dominant),
for near-critical particles $s=p/2$ (and thus the $v^{2s}$ term and the
$v^{p}$ term are of the same order), and for near-ultracritical particles
$s=p/2$ and condition (\ref{ultra_crit_prop}) holds, we obtain for
$|u^{r}|=\frac{\sqrt{A}}{N}P$:%
\begin{align}
\text{Near-subcritical particle}  &  \text{: }|u^{r}|=\sqrt{\frac{A_{q}%
}{\kappa_{p}}}X_{s}v^{\frac{q-p}{2}+s},\label{4-vel-fin-tun_1}\\
\text{Near-critical particle}  &  \text{:}|u^{r}|=\sqrt{\frac{A_{q}}%
{\kappa_{p}}}\sqrt{X_{p/2}^{2}-\kappa_{p}\left(  1+\frac{L_{H}^{2}}{g_{\varphi
H}}\right)  }v^{\frac{q}{2}},\\
\text{Near-ultracritical particle}  &  \text{:}|u^{r}|=(u^{r})_{c}v^{c}.
\label{4-vel-fin-tun_3}%
\end{align}

In all three cases, in the limit $\delta\rightarrow0$ we obtain
correspondingly subcritical, critical, and ultracritical particles, as
introduced in \cite{ov-zas-prd}. As one can see, the behavior of the
four-velocity in these cases is different that justifies the necessity of
distinguishing between near-critical, near-ultracritical, and
near-overcritical particles. In our further analysis, we will also observe
that other physical quantities are different for these particle types. The
only exceptional case is near-overcritical particles that were not considered
in \cite{ov-zas-prd}. This is because when we take the limit $\delta
\rightarrow0$ in this case, $P$ becomes complex near the horizon. Due to this
fact, overcritical particles with $\delta=0$ cannot reach the horizon and
cannot participate in the "pure" BSW phenomenon that was the focus of our
investigation in \cite{ov-zas-prd}. However, non-zero $\delta$ allows such
particles to reach the horizon and thus they are considered in our work.

\section{Different scales of parameters\label{diff_scales}}

Now, we are going to make a next\ step and discuss the interplay of different
parameters in our problem. In the "pure" BSW phenomenon, there is only one
small parameter: the point of collision $v_{c}\ll r_{h}$ (hereafter, the index
"c" denotes the collision point). However, in the case of near-fine-tuned
particles, a new parameter $\delta$ appears, where $\delta\ll1$. To analyze
the different properties of near-fine-tuned particles, we have to specify the
scales of two parameters: $v_{c}$ and $v_{e}$ (or $v_{t}$ if it exists). We
have four different cases (see below) which can be described by different
relations between $v_{c}$ and $v_{t}$ or $v_{e}$ (in each of these cases, the
conditions $\frac{v_{c}}{r_{h}}\ll1$ and $\delta\ll1$ hold).

Before proceeding further, we must make comment on the situation when
$v_{c}<v_{t}$. This means that a particle cannot arrive at the point of
collision from infinity since it would bounce back in the turning point. We
assume that it appears between the horizon and a turning point due to special
initial condition and do not specify their nature. (Say, particle can appear
there due to quantum creation, etc.) In doing so, the interval in which the
scenario develops is very tiny (for nonextremal black holes explicit
expressions for it can be found in eq.(18) of \cite{gr-pav} for the Kerr
metric and in eq. 18) of \cite{prd} for a more general case). Nonetheless, the
BSW effect can indeed exist.

\subsection{1-st case: $v_{c}\gg v_{e}$ (or $v_{t}$)\label{sec_4_a}}

In this case, the point of collision is much further from the horizon than
$v_{e,t}$ (hereafter, the notation $v_{e,t}$ means $v_{e}$ or $v_{t}$ if it
exists). This effectively means that we can take $\delta=0$ while keeping
$\frac{v_{c}}{r_{h}}$ terms in all the quantities we are interested in. To see
why this is so, let us consider the expression for $P$:%
\begin{equation}
P\approx\sqrt{(\delta+X_{s}v_{c}^{s})^{2}-\left(  \kappa_{p}\left(
1+\frac{L_{H}^{2}}{g_{\varphi H}}\right)  v_{c}^{p}\right)  }. \label{p_e>>1}%
\end{equation}

Using that $v_{e,t}\sim\delta^{\max(1/s,2/p)}$ (see (\ref{vc_subcr}),
(\ref{vc_cr}), (\ref{vc_ultracr}) and (\ref{vc_overcr}))$,$ or, inversely,
$\delta\sim v_{e,t}^{\min(s,p/2)},$ we can substitute this to the expression
for (\ref{p_e>>1}) and get:%
\begin{equation}
P\sim\sqrt{(v_{e,t}^{\min(s,p/2)}+X_{s}v_{c}^{s})^{2}-\left(  \kappa
_{p}\left(  1+\frac{L_{H}^{2}}{g_{\varphi H}}\right)  v_{c}^{p}\right)  }.
\end{equation}

We see that the ratio $v_{e}^{\min(s,p/2)}/X_{s}v_{c}^{s}$ tends to zero
because of the condition $v_{c}\gg v_{e}$. This means that $v_{e,t}%
^{\min(s,p/2)}$ term is negligible and dominant terms in (\ref{p_e>>1})
contain $v_{c}$ without $\delta.$ Thus we have:%
\begin{align}
P  &  \approx\sqrt{(X_{s}v_{c}^{s})^{2}-\left(  \kappa_{p}\left(
1+\frac{L_{H}^{2}}{g_{\varphi H}}\right)  v_{c}^{p}\right)  }+O(\delta)=\\
&  \approx\sqrt{\left(  \mathcal{X-}\delta\right)  ^{2}-N^{2}\left(
1+\frac{\mathcal{L}^{2}}{g_{\varphi\varphi}}\right)  }+O(\delta)\text{.}%
\end{align}

One may easily see that the dominant term does not involve $\delta$. Thus, in
this case, the dominant behavior can be obtained by taking the $\delta
\rightarrow0$ limit. This will correspond to a "pure" BSW phenomenon, which
was completely analyzed in \cite{ov-zas-prd}. In this case, the properties of
near-fine-tuned particles are similar to corresponding fine-tuned particles
(for example, near-subcritical particles correspond to subcritical particles,
and so on).

\subsection{2-nd case: $v_{c}\sim v_{e}$ (or $v_{t}$)\label{iv_b}}

In this case, the point of collision is in the same scale of distances as
$v_{e,t}$. Let us obtain expressions for $P(v_{c})$ and $\mathcal{X}(v_{c})$
in the leading order for different types of particles.

\begin{itemize}
\item Near-subcritical particle:

\qquad For such particles, the terms $\delta^{2},$ $v^{2s},$ and $\delta
v^{s}$ are dominant in the expansion for $P^{2}$ (see (\ref{p=0_eqqq})). This
indicates that $\mathcal{X}$ terms prevail over $N^{2}$ in the expansion of
$P$ (see (\ref{P})). Therefore, in this case, we have:
\begin{equation}
P=\sqrt{\mathcal{X}^{2}-N^{2}\left(  1+\frac{\mathcal{L}^{2}}{g_{\varphi
\varphi}}\right)  }\approx\mathcal{X-}\frac{N^{2}}{2\mathcal{X}}\left(
1+\frac{\mathcal{L}^{2}}{g_{\varphi\varphi}}\right)  . \label{p_nsc_cond}%
\end{equation}

\qquad In this expression, we have also included higher-order terms because,
as we will demonstrate it below in the analysis of the energy of the
collision, the dominant terms will cancel each other out, and the energy will
be determined by higher-order corrections.

\qquad If there is no turning point (that occurs only when $\delta>0$ and
$X_{s}>0$, see the 3rd condition in (\ref{fit_cond})), the dominant term in
$P$ (specifically $\mathcal{X}$) at $v_{c}\sim v_{e}$ takes the following
form:%
\begin{equation}
P\approx\mathcal{X}\approx\delta+X_{s}v_{c}^{s}+o(\delta). \label{px_nsc}%
\end{equation}

\qquad Using the approximation $v_{c}\sim v_{e}\sim\delta^{1/s}$, we observe
that $P\sim\mathcal{X\sim}$ $\delta$. However, if a turning point exists, the
situation becomes more complex. In this case, we can invert (\ref{vt_subcr})
and obtain $\delta\approx-X_{s}v_{t}^{s}$. By substituting this into the
expression for $\mathcal{X}$, we have:
\begin{equation}
\mathcal{X}\approx\delta+X_{s}v_{c}^{s}\approx X_{s}(v_{c}^{s}-v_{t}^{s}).
\end{equation}

\qquad One can easily see that since $v_{c}\sim v_{e}$, $P$ is of the order
$\delta$. However, if $v_{c}$ approaches $v_{t}$ (this means that the
collision point reaches the turning point), the quantity $\mathcal{X}$ tends
to zero. To describe the small difference that arises in this case, we make
the assumption:
\begin{equation}
\delta=-X_{s}v_{c}^{s}+B_{r}v_{c}^{r}, \label{delt_spec_cond}%
\end{equation}

(in fact we could assume that difference $v_{c}-v_{t}$ is some small parameter
but further analysis will be simpler if we consider more concrete example
according to (\ref{delt_spec_cond})). In this case
\begin{equation}
\mathcal{X}\approx B_{r}v_{c}^{r}, \label{x_b_nsc}%
\end{equation}%
\begin{equation}
P\approx B_{r}v_{c}^{r}-\frac{A_{p/2}^{2}}{2B_{r}}v_{c}^{p-r}. \label{p_b_nsc}%
\end{equation}

\qquad From the expression for $\mathcal{X}$, we can observe that the
forward-in-time condition holds only if $B_{r}>0$. On the other hand, from the
expression for $P$, we can see that the expansion (\ref{p_nsc_cond}) holds
only if $s<r<p/2$. It may seem strange that this special case needs to be
considered. However, as we will demonstrate it in the analysis of the energy
of the collision, the behavior of energy becomes quite special when $v_{c}$
approaches $v_{t}$.

\qquad In further analysis, we will need to know the behavior of the quantity
$\sqrt{\mathcal{X}^{2}-N^{2}}$. Considering that in this case $s<p/2$ and thus
the $\mathcal{X}$ term is dominant, we have:%
\begin{equation}
\sqrt{\mathcal{X}^{2}-N^{2}}\approx\mathcal{X-}\frac{N^{2}}{2\mathcal{X}}.
\label{sqrt_nsc}%
\end{equation}

\qquad As one can see, the behaviour of this quantity is similar to the
beaviour of $P.$

\item Near-critical particle:

\qquad For these particles situation is more complicated because all terms are
comparable, and we have (in the main order of $\delta$):%
\begin{equation}
P\approx\sqrt{(\delta+X_{p/2}v_{c}^{p/2})^{2}-A_{p/2}^{2}v_{c}^{p}}.
\label{p_nc_cond}%
\end{equation}

\qquad If there is no turning point, we take into account that $v_{c}\sim
v_{e}\sim\delta^{2/p}$ (see (\ref{vc_cr})) and we see that $P\sim\delta.$

\qquad In this case we also have
\begin{equation}
\mathcal{X}\approx\delta+X_{p/2}v_{c}^{p/2}\sim\delta. \label{x_nc_cond}%
\end{equation}

\qquad For $\sqrt{\mathcal{X}^{2}-N^{2}}$, we can observe that both
$\mathcal{X}^{2}$ and $N^{2}$ are of the same order, and we have
$\sqrt{\mathcal{X}^{2}-N^{2}}\sim\delta$.

\qquad If there are roots of equation $P=0$ (that occurs if $\delta>0$ and
$A_{p/2}>X_{p/2}$ or $\delta<0$ and $A_{p/2}<X_{p/2}$, see (\ref{vt_nc})), the
expression for $P$ becomes (by inverting (\ref{vt_cr}) that gives us
$\delta=(A_{p/2}-X_{p/2})v_{t}^{p/2}$):%
\begin{align}
P  &  \approx\sqrt{\delta^{2}+2\delta X_{p/2}v_{c}^{p/2}+(X_{p/2}^{2}%
-A_{p/2}^{2})v_{c}^{p}}\approx\\
&  \sqrt{(A_{p/2}-X_{p/2})^{2}v_{t}^{p}+2(A_{p/2}-X_{p/2})X_{p/2}v_{t}%
^{p/2}v_{c}^{p/2}+(X_{p/2}^{2}-A_{p/2}^{2})v_{c}^{p}}.
\end{align}

\qquad Note that if $v_{c}\sim v_{t}\sim\delta^{2/p},$ we have $P\sim\delta.$
However, if $v_{c}\rightarrow v_{t}$ one can check that $P\rightarrow0.$ In
this case we assume
\begin{equation}
\delta=(A_{p/2}-X_{p/2})v_{c}^{p/2}+B_{r}v_{c}^{r}, \label{delta_spec_cond_cr}%
\end{equation}

where $r>p/2.$

\qquad Substituting this to $P$ we have:%
\begin{align}
P  &  \approx\sqrt{(\delta+X_{p/2}v_{c}^{p/2})^{2}-A_{p/2}^{2}v_{c}^{p}}%
=\sqrt{(A_{p/2}v_{c}^{p/2}+B_{r}v_{c}^{r})^{2}-A_{p/2}^{2}v_{c}^{p}}\approx\\
&  \approx\sqrt{2A_{p/2}B_{r}v_{c}^{p/2+r}+B_{r}^{2}v_{c}^{2r}}\approx
\sqrt{2A_{p/2}B_{r}}v_{c}^{p/4+r/2}. \label{p_b_cr}%
\end{align}

\qquad This will not have any special consequences in the behaviour of energy
of collision but will influence the behaviour of acceleration.

\item Near-ultracritical particle:

\qquad In this case we can use (\ref{ultra_crit_prop}) and (\ref{vc_ultracr})
that gives us:%
\begin{equation}
P\approx\sqrt{\delta^{2}+2\delta X_{p/2}v_{c}^{p/2}}. \label{p_nuc_cond}%
\end{equation}
\qquad\qquad\qquad\qquad\ 

\qquad If there is no turning point, using (\ref{vc_ultracr}) we have
$v_{c}\sim v_{e}\sim\delta^{2/p}.$ Substituting this to (\ref{p_nuc_cond}) we
see that $P\sim\delta.$

\qquad In this case we also have%
\begin{equation}
\mathcal{X}\approx\delta+X_{p/2}v_{c}^{p/2}\sim\delta. \label{x_nuc_cond}%
\end{equation}

\qquad The same holds for $\sqrt{\mathcal{X}^{2}-N^{2}}:$%
\begin{equation}
\sqrt{\mathcal{X}^{2}-N^{2}}\approx\sqrt{(\delta+X_{p/2}v_{c}^{p/2}%
)^{2}-\left(  \kappa_{p}v_{c}^{p}+o(v_{c}^{p})\right)  }\sim\delta.
\end{equation}

\qquad If a turning point exists, inverting (\ref{vt_nuc}) we have
$\delta=-2X_{s}v_{t}^{s}.$ Substituting this in the expression for $P$ we
have:%
\begin{equation}
P\approx2X_{p/2}\sqrt{v_{t}^{p/2}(v_{t}^{p/2}-v_{c}^{p/2})}.
\end{equation}

\qquad If $v_{c}\sim v_{t}$ we see that $P\sim\delta.$ While if $v_{c}%
\rightarrow v_{t},$ $P\rightarrow0.$ In this case let us write
\begin{equation}
\delta=-2X_{p/2}v_{c}^{p/2}+B_{r}v_{c}^{r}, \label{delta_spec_cond_ucr}%
\end{equation}

where $r>p/2.$

\qquad Substituting this in the expression for $P$ we have:%
\begin{align}
P  &  \approx\sqrt{(-2X_{p/2}v_{c}^{p/2}+B_{r}v_{c}^{r})^{2}+2(-2X_{p/2}%
v_{c}^{p/2}+B_{r}v_{c}^{r})X_{p/2}v_{c}^{p/2}}\approx\label{p_b_ultra}\\
&  \approx\sqrt{2X_{p/2}B_{r}v_{c}^{p/2+r}+B_{r}^{2}v_{c}^{2r}}\approx
\sqrt{2X_{p/2}B_{r}}v_{c}^{p/4+r/2}.
\end{align}

\qquad However, we will see that in this case the energy of the collision does
not change drastically depending on $v_{c}$.

\item Near-overcritical particle:

\qquad In this case only $\delta^{2}$ and $v_{c}^{p}$ terms are dominant (see
(\ref{vc_overcr})) and we get:%
\begin{equation}
P\approx\sqrt{\delta^{2}-A_{p/2}^{2}v_{c}^{p}}. \label{p_noc_cond}%
\end{equation}

\qquad Now, the turning point always exists. Inverting (\ref{vt_overcr}) we
have $\delta=A_{p/2}v_{t}^{p/2}.$ Substituting this in $P$ we obtain:%
\begin{equation}
P\approx\sqrt{A_{p/2}^{2}(v_{t}^{p}-v_{c}^{p})}.
\end{equation}

\qquad If $v_{c}\sim v_{t}$ we see that $P\sim\delta.$ While if $v_{c}%
\rightarrow v_{t},$ $P\rightarrow0.$ In this case let us write%
\begin{equation}
\delta=A_{p/2}v_{c}^{p/2}+K_{r}v_{c}^{r}, \label{delta_spec_cond_ocr}%
\end{equation}

where $r>p/2.$ Substituting this in the expression for $P$ we have:%
\begin{align}
P  &  \approx\sqrt{\delta^{2}-A_{p/2}^{2}v_{c}^{p}}\approx\sqrt{(A_{p/2}%
v_{c}^{p/2}+K_{r}v_{c}^{r})^{2}-A_{p/2}^{2}v_{c}^{p}}\approx\label{p_b_over}\\
&  \approx\sqrt{2A_{p/2}B_{r}v_{c}^{p/2+r}+B_{r}^{2}v_{c}^{2r}}\approx
\sqrt{2A_{p/2}B_{r}}v_{c}^{p/4+r/2}.
\end{align}

\qquad However, we will see that in this case the energy of the collision does
not change drastically depending on $v_{c}$.
\end{itemize}

\subsection{3-rd case: $v_{c}\ll v_{e}$\label{iv_c}}

Now, the point of collision is much closer to the horizon than $v_{e}$. This
case can be obtained simply by taking the limit $v_{c}\rightarrow0$ while
keeping terms with $\delta$. To see this, we have to use the fact that
$v_{e,t}\sim\delta^{\max(1/s,2/p)}$ (see (\ref{vc_subcr}), (\ref{vc_cr}),
(\ref{vc_ultracr}) and (\ref{vc_overcr})), or, inversely, $\delta\sim
v_{e,t}^{\min(s,p/2)}$. We can substitute this into the expression for
(\ref{p_e>>1}) and get:%
\begin{equation}
P\sim\sqrt{(v_{e,t}^{\min(s,p/2)}+X_{s}v_{c}^{s})^{2}-\left(  \kappa
_{p}\left(  1+\frac{L_{H}^{2}}{g_{\varphi H}}\right)  v_{c}^{p}\right)  .}%
\end{equation}
\qquad\qquad\ \ \qquad

First of all, we note that the ratio of the 2nd and 1st terms in $P$ is
$\frac{v_{c}^{s}}{v_{e,t}^{\min(s,p/2)}}$. Since we assume $v_{c}\ll v_{e}$,
the second term is much less than the first one. Additionally, we observe that
the 3rd term is much less than the 1st term because $\frac{v_{c}^{p}}%
{v_{e,t}^{2\min(s,p/2)}}\ll1$ due to $v_{c}\ll v_{e}$. Therefore, the first
term is dominant and we can write:
\begin{equation}
P=\sqrt{\mathcal{X}^{2}-N^{2}\left(  1+\frac{\mathcal{L}^{2}}{g_{\varphi
\varphi}}\right)  }\approx\mathcal{X-}\frac{N^{2}}{2\mathcal{X}}\left(
1+\frac{\mathcal{L}^{2}}{g_{\varphi\varphi}}\right)  . \label{p_vc<<vt}%
\end{equation}

We keep here higher order corrections because they will be important for
analysis of energy of collision.

\bigskip Also we note that in this case
\begin{equation}
\sqrt{\mathcal{X}^{2}-N^{2}}\approx\mathcal{X-}\frac{N^{2}}{2\mathcal{X}}.
\label{sqrt_vc<<vt}%
\end{equation}

Generally speaking, this corresponds to usual particles because all entries of
the point of collision in the expression for $P$ are much less than the value
of $\mathcal{X}$ on the horizon.

The main idea of this section is that there are 4 cases of possible interplay
between the small parameters $v_{c}$ and $\delta$ (actually, we saw that the
classification is mainly defined by the relations between $v_{c}$ and $\delta$
raised to some power).

The first case occurs when $v_{e,t}\ll v_{c}$ (but $v_{c}\ll r_{h}$ still
holds) that, as we showed, corresponds to the "pure" BSW phenomenon, which is
not of interest in this work.

The second case is $v_{e,t}\sim v_{c}$, for which, as we showed, $P,$
$\mathcal{X},$ and $\sqrt{\mathcal{X}^{2}-N^{2}}$ are $\sim\delta$ (exact
expressions can be found in the corresponding parts of the text).

In the case $v_{c}\ll v_{e,t}$, all $v_{c}$ terms in the expressions for $P,$
$\mathcal{X}$ and $\sqrt{\mathcal{X}^{2}-N^{2}}$ are negligible that
corresponds to the case of usual particles that have already been investigated.

\subsection{4-th case: $v_{c}\rightarrow v_{t}$}

The last case is possible if the particles impose the existence of turning
points and when $v_{c}\rightarrow v_{t}$. In this case, depending on the type
of particle, we assume that $\delta$ is given by (\ref{delt_spec_cond}),
(\ref{delta_spec_cond_cr}), (\ref{delta_spec_cond_ucr}) or
(\ref{delta_spec_cond_ocr}). As we showed, in these cases $P$ is either $\sim
v_{c}^{r}$ (for near-subcritical particles) or $\sim v_{c}^{p/4+r/2}$ (for
near-critical, near-ultracritical, or near-overcritical particles).

To summarize, new scenarios of particle's motion related to the non-zero
$\delta$ can only be obtained if $v_{e,t}\sim v_{c}$ or $v_{c}\rightarrow
v_{t}$. All other ranges of the $v$ coordinate correspond to already
investigated cases \cite{ov-zas-prd}.

\section{Energy of collision\label{energy}}

\subsection{General relations}

As we mentioned above, we are mainly interested in the possibility of the BSW
phenomenon that is related to an unbounded growth of energy in the center of
mass frame of two colliding particles. This energy is given by%
\begin{equation}
E_{c.m.}^{2}=-(m_{1}u_{1\mu}+m_{2}u_{2\mu})(m_{1}u_{1}^{\mu}+m_{2}u_{2}^{\mu
})=m_{1}^{2}+m_{2}^{2}-2m_{1}m_{2}u_{1}^{\mu}u_{2\mu}\text{,}%
\end{equation}
where $\gamma=-u_{1\mu}u^{2\mu}$ is the Lorentz gamma factor of relative
motion. Substituting the expression for the four-velocity (\ref{4_vel}), we
have%
\begin{equation}
\gamma=\frac{\mathcal{X}_{1}\mathcal{X}_{2}-P_{1}P_{2}}{N^{2}}-\frac
{\mathcal{L}_{1}\mathcal{L}_{2}}{g_{\varphi\varphi}}. \label{gamma}%
\end{equation}

Hereafter, we assume that both particles move toward the horizon, so
$\sigma_{1}=\sigma_{2}=-1$.

The second term in (\ref{gamma}) is regular, so we are interested, when the
first one is unbounded.

Let us discuss all possible cases of particle collision depending on types of
particles. Cases of collision between fine-tuned particles with usual or other
fine-tuned particles have already been discussed in \cite{ov-zas-prd}. Thus,
we are left with a discussion of the collision between near-fine-tuned
particles with fine-tuned or usual particles, as well as the cases where both
particles participating in the collision are near-fine-tuned.

\subsection{1-st particle is fine-tuned (or usual), 2-nd is near fine-tuned.}

Let us start with the analysis of the case in which one particle (let us call
this particle 1) is fine-tuned (or usual) and particle 2 is near fine-tuned.
Before we proceed further, let us remind a reader several properties of
fine-tuned particles. Different types of particles are defined through their
expansion of $\mathcal{X}$. Generally, $\mathcal{X}$ for fine-tuned particles
has an expansion in the form%
\begin{equation}
\mathcal{X=}X_{s}v^{s}+o(v^{s}), \label{X_us}%
\end{equation}
where for usual particles $s=0$, for subcritical $0<s<p/2$, for critical
$s=p/2$, for ultracritical $s=p/2$, and the condition (\ref{ultra_crit_prop})
has to hold. In further analysis, we use the abbreviations "U" for usual
particle, "SC" for subcritical, "C" for critical, and "UC" for ultracritical.

For usual and subcritical particles, as $\frac{N^{2}}{\mathcal{X}}%
\rightarrow0$ as $v\rightarrow0$, we can expand the function $P$ and obtain:
\begin{equation}
P=\mathcal{X}-\frac{N^{2}}{2\mathcal{X}}\left(  \frac{\mathcal{L}^{2}%
}{g_{\varphi\varphi}}+1\right)  +...=\mathcal{X}+O(v^{p-s}). \label{P_subcr}%
\end{equation}

For critical particles $N^{2}$ and $\mathcal{X}$ are of the same order, so we
have%
\begin{equation}
P=P_{p/2}v^{p/2}+..., \label{P_cr}%
\end{equation}

where $P_{p/2}=\sqrt{X_{p/2}^{2}-\kappa_{p}\left(  \frac{L_{H}^{2}}{g_{\varphi
H}}+1\right)  }.$

For ultracritical particles
\begin{equation}
P=P_{c+(p-q)/2}v^{^{c+\frac{p-q}{2}}}+..., \label{P_ultracr}%
\end{equation}
where $c>q/2$ and $P_{c+(p-q)/2}=\sqrt{\frac{\kappa_{p}}{A_{q}}}(u^{r})_{c}$
(note that these expansions correspond to the ones obtained by taking the
limit $\delta\rightarrow0$ for near-fine-tuned particles, see
(\ref{4-vel-fin-tun_1}-\ref{4-vel-fin-tun_3})).

Now, let us discuss the properties of particle 2. As concluded at the end of
Section \ref{diff_scales}, the only interesting cases are those when for the
second particle either $v_{c}\sim v_{e,t}$ or $v_{c}\rightarrow v_{t},$ so we
only need to consider these cases.

Next, consider the collision of two particles. If the first particle is usual
or subcritical, we substitute (\ref{P_subcr}) and (\ref{X_us}) to
(\ref{gamma}) and get:%
\begin{equation}
\gamma\approx\frac{X_{s_{1}}^{(1)}v_{c}^{s_{1}}[\mathcal{X}_{2}-P_{2}]}{N^{2}%
}+\frac{P_{2}}{2X_{s_{1}}^{(1)}v_{c}^{s_{1}}}\left(  \frac{L_{H1}^{2}%
}{g_{\varphi\varphi}}+1\right)  \approx\frac{X_{s}^{(1)}}{\kappa_{p}}%
\frac{[\mathcal{X}_{2}-P_{2}]}{v_{c}^{p-s_{1}}}+\frac{\left(  \frac{L_{H1}%
^{2}}{g_{\varphi\varphi}}+1\right)  }{2X_{s_{1}}^{(1)}}\frac{P_{2}}%
{v_{c}^{s_{1}}}, \label{gamma_subcr}%
\end{equation}
where upper index $(1)$ means quantities related to the 1-st particle
(fine-tuned). We will postpone analysis of this complicated expression to the
next subsections.

If the 1-st particle is critical, then we substitute (\ref{P_cr}) and
(\ref{X_us}) to (\ref{gamma}) and have:%

\begin{equation}
\gamma\approx\frac{1}{\kappa_{p}}\frac{X_{p/2}^{(1)}\mathcal{X}_{2}%
-P_{p/2}^{(1)}P_{2}}{v_{c}^{p/2}}. \label{gamma_cr}%
\end{equation}

If the 1-st particle is ultracritical,%

\begin{equation}
\gamma\approx\frac{1}{\kappa_{p}}\frac{X_{s_{1}}^{(1)}\mathcal{X}_{2}}%
{v_{c}^{p/2}} \label{gamma_ultracr}%
\end{equation}

(note that in this case term with $P_{1}$ is absent because $P_{1}$ is of
higher order in $v_{c}$ then $\mathcal{X}_{1}$).

\subsubsection{Near-subcritical particles}

Now, let us analyze the behavior of the gamma factor concerning various types
of the second particle. We begin by considering a scenario where particle 2 is
near-subcritical. The initial case for analysis involves the situation where
particle 1 is usual or subcritical. Before delving into the analysis of the
behavior of $\gamma$, it is important to note that in this case (using
(\ref{p_nsc_cond}))
\begin{equation}
\mathcal{X}_{2}-P_{2}=\frac{N^{2}}{2\mathcal{X}_{2}}\left(  1+\frac
{\mathcal{L}_{2}^{2}}{g_{\varphi\varphi}}\right)  .
\end{equation}

Substituting this in (\ref{gamma_subcr}) we have:%
\begin{equation}
\gamma\approx\frac{X_{s_{1}}^{(1)}v_{c}^{s_{1}}}{2\mathcal{X}_{2}}\left(
1+\frac{L_{H2}^{2}}{g_{\varphi\varphi}}\right)  +\frac{\left(  \frac
{L_{H1}^{2}}{g_{\varphi\varphi}}+1\right)  }{2X_{s_{1}}^{(1)}}\frac
{\mathcal{X}_{2}}{v_{c}^{s_{1}}}.
\end{equation}

Now, let us analyze the various possible cases. As demonstrated in Section
\ref{diff_scales}, the behavior of $\mathcal{X}_{2}$ depends on whether
$v_{c}\sim v_{e,t}$ or $v_{c}\rightarrow v_{t}.$ If $v_{c}\sim v_{e,t},$ we
can use (\ref{vc_subcr}) (or (\ref{fit_cond}) if a turning point exists) and
get $v_{c}\sim\delta^{1/s_{2}}$ (or, inverting, $\delta\sim v_{c}^{s_{2}}$).
Furthermore, using the relation $\mathcal{X}_{2}\sim\delta$ when $v_{c}\sim
v_{e,t}$, we obtain two terms in the expression for $\gamma$: the first one is
proportional to $\sim v_{c}^{s_{1}-s_{2}},$ while the second term is
proportional to $\sim v_{c}^{s_{2}-s_{1}}.$ The dominant term is determined by
the smaller degree in $v_{c}$ between these two. Combining these terms, we can
find that $\gamma\sim v_{c}^{-|s_{2}-s_{1}|}.$

If the turning point exists (this, as one can see from (\ref{fit_cond}),
happens if $\delta^{(2)}>0$ and $X_{s}^{(2)}<0$ or if $\delta^{(2)}<0$ and
$X_{s}^{(2)}>0$) and if $v_{c}\rightarrow v_{t}$, it follows from
(\ref{x_b_nsc}) that $\mathcal{X}_{2}\approx B_{r_{2}}^{(2)}v_{c}^{r_{2}}.$
Substituting this in (\ref{gamma_subcr}) we get:%
\begin{equation}
\gamma\approx\frac{X_{s_{1}}^{(1)}v_{c}^{s_{1}}}{2B_{r_{2}}^{(2)}v_{c}^{r_{2}%
}}\left(  1+\frac{L_{H2}^{2}}{g_{\varphi\varphi}}\right)  +\frac{\left(
\frac{L_{H1}^{2}}{g_{\varphi\varphi}}+1\right)  }{2X_{s_{1}}^{(1)}}%
\frac{B_{r_{2}}^{(2)}v_{c}^{r_{2}}}{v_{c}^{s_{1}}}.
\end{equation}

As one can see, there are two terms: the first is $\sim v_{c}^{s_{1}-r_{2}}$,
while the second is $\sim v_{c}^{r_{2}-s_{1}}$. Since the divergence of the
gamma factor is defined by the dominant term, we can combine these two terms
and write $\gamma\sim v_{c}^{-|s_{1}-r_{2}|}$.

All these expressions can be formulated briefly under this condition:%
\begin{equation}
\text{If 1-st particle is U or SC and 2-nd-NSC, then }\left\{  \text{%
\begin{tabular}
[c]{l}%
$\gamma\sim v_{c}^{-|s_{2}-s_{1}|}\text{ if }v_{c}\sim v_{e,t},$\\
$\gamma\sim v_{c}^{-|s_{1}-r_{2}|}$ if $v_{c}\rightarrow v_{t}$ (see
(\ref{delt_spec_cond})).
\end{tabular}
}\right.  \text{ } \label{gamma_1us_2_nsc}%
\end{equation}

In the case if 1-st particle is critical (see (\ref{gamma_cr})) or
ultracritical (see (\ref{gamma_ultracr})) and if $v_{c}\sim v_{e,t}$ we can
use that $\mathcal{X}_{2}\sim\delta$ (see (\ref{px_nsc}) and discussion after
it) and get $\gamma\sim\frac{\delta}{v_{c}^{p/2}}.$ Now, $v_{c}\sim
\delta^{1/s_{2}}$ (see (\ref{vc_subcr}) and (\ref{vt_subcr})). Inverting it,
we have $\delta\sim v_{c}^{s_{2}},$ so we can write $\gamma\sim v_{c}%
^{s_{2}-p/2}.~$

If the turning point exists (this, as one can see from (\ref{fit_cond}),
happens if $\delta^{(2)}>0$ and $X_{s}^{(2)}<0$ or if $\delta^{(2)}<0$ and
$X_{s}^{(2)}>0$) and if $v_{c}\rightarrow v_{e},$ we can use (\ref{x_b_nsc})
and write $\mathcal{X}_{2}\approx B_{r_{2}}^{(2)}v_{c}^{r_{2}}$. If particle 1
is critical, we can use (\ref{gamma_cr}) and write
\begin{equation}
\gamma\approx\frac{1}{\kappa_{p}}\frac{(X_{p/2}^{(1)}-P_{p/2}^{(1)})B_{r_{2}%
}^{(2)}v_{c}^{r_{2}}}{v_{c}^{p/2}}.
\end{equation}

From this expression we see easily that $\gamma\sim v_{c}^{r_{2}-p/2}.$ The
same holds for case when 1-st particle is ultracritical (to see this one has
to substitute (\ref{x_b_nsc}) to (\ref{gamma_ultracr})). To summarize, we
have:%
\begin{equation}
\text{If 1-st particle is C or UC and 2-nd-NSC., then}\left\{  \text{
\begin{tabular}
[c]{l}%
$\gamma\sim v_{c}^{s_{2}-\frac{p}{2}}\text{ if }v_{c}\sim v_{e},$\\
$\gamma\sim v_{c}^{r_{2}-\frac{p}{2}}$ if $v_{c}\rightarrow v_{t}$ (see
(\ref{delt_spec_cond})).
\end{tabular}
}\right.  \text{ } \label{gamma_1_cr_2nsc}%
\end{equation}

\subsubsection{Near-critical, near-ultracritical and near-overcritical
particles}

Now let us consider the cases when the 2-nd particle is near-critical,
near-ultracritical or near-overcritical. Before we proceed further, let us
consider $\mathcal{X}_{2}-P_{2}$ for a different types of particles. We start
with a case of near-critical particles. Using (\ref{p_nc_cond}) and
(\ref{x_nc_cond}) we have:%

\begin{equation}
\mathcal{X}_{2}-P_{2}\approx(\delta+X_{p/2}v_{c}^{p/2})-\sqrt{(\delta
+X_{p/2}v_{c}^{p/2})^{2}-\kappa_{p}\left(  1+\frac{L_{H}^{2}}{g_{\varphi H}%
}\right)  v_{c}^{p}}.
\end{equation}

If $v_{c}\sim v_{e,t},$ corresponding terms do not cancel each other and,
using (\ref{vc_cr}), we have $\mathcal{X}_{2}-P_{2}\sim\delta$ (note that this
also holds for the case $v_{c}\rightarrow v_{t}$ because in this limit
$P_{2}\rightarrow0$ while $\mathcal{X}_{2}$ remains $\sim\delta$)$.$

In the case of near-overcritical particles situation is somehow similar. Using
(\ref{p_noc_cond}) we have:%
\begin{equation}
\mathcal{X}_{2}-P_{2}\approx\delta-\sqrt{\delta^{2}-\kappa_{p}\left(
1+\frac{L_{H}^{2}}{g_{\varphi H}}\right)  v_{c}^{p}}.
\end{equation}

If $v_{c}\sim v_{e,t}$ (and $v_{c}\rightarrow v_{t}$) corresponding terms do
not cancel each other and, using (\ref{vc_overcr}), we have $\mathcal{X}%
_{2}-P_{2}\sim\delta.$

In the case of near-ultracritical particles the situation is also similar.
Using (\ref{p_nuc_cond}) and (\ref{x_nuc_cond}), we get:%
\begin{equation}
\mathcal{X}_{2}-P_{2}\approx\left(  \delta+X_{p/2}v_{c}^{p/2}\right)
-\sqrt{\delta^{2}+2\delta X_{p/2}v_{c}^{p/2}}.
\end{equation}

If $v_{c}\sim v_{e,t}$ (and $v_{c}\rightarrow v_{t}$), the corresponding terms
do not cancel each other, and using (\ref{vc_ultracr}), we have $\mathcal{X}%
_{2}-P_{2}\sim\delta$.

All these cases are similar in the sense that if $v_{c}\sim v_{e,t}$ (or
$v_{c}\rightarrow v_{t}$), then $\mathcal{X}_{2}-P_{2}\sim\delta$. Using these
facts, we are ready to analyze the gamma-factor. We start with a case when the
first particle is usual or subcritical (see (\ref{gamma_subcr})). If
$v_{c}\sim v_{e,t}$, then we have two terms: the first one is $\sim
\frac{\delta}{v_{c}^{p-s_{1}}}$, while the second one is $\sim\frac{\delta
}{v_{c}^{s_{1}}}$. Since for usual or subcritical particles, $0<s_{1}<p/2$,
the first term is dominant. Using the fact that $v_{c}\sim v_{e,t}$ and
equations (\ref{vc_cr}), (\ref{vc_overcr}), or (\ref{vc_ultracr}), we have
$v_{c}\sim\delta^{2/p}$, or inversely, $\delta\sim v_{c}^{p/2}$. Substituting
this into (\ref{gamma_subcr}), we get $\gamma\sim v_{c}^{-(\frac{p}{2}-s_{1}%
)}$. To summarize, we have:%
\begin{equation}
\text{If 1-st particle is U or SC and 2-nd-NC, NUC or NOC, then }\gamma\sim
v_{c}^{-(\frac{p}{2}-s_{1})}\text{ if }v_{c}\sim v_{e,t}\text{ .}
\label{gamma_1_us_2_nc}%
\end{equation}

If particle 1 is critical (see (\ref{gamma_cr})) of ultracritical (see
(\ref{gamma_ultracr})), $\gamma\sim\frac{\delta}{v_{c}^{p/2}}~$if $v_{c}\sim
v_{e,t}$. Reverting (\ref{vc_cr}) or (\ref{vc_overcr}) we can write that in a
case $v_{c}\sim v_{e,t}$, $\delta\sim v_{c}^{p/2}.$ Substituting this in the
expression for gamma factor we have:%
\begin{equation}
\text{If 1-st particle is C or UC and 2-nd-NC, NUC or NOC, then }%
\gamma=O(1)\text{ if }v_{c}\sim v_{e,t}\text{ or if }v_{c}\rightarrow v_{t}.
\label{gamma_1_cr_2_nc}%
\end{equation}

\subsection{1-st and 2-nd particles are near fine-tuned.}

First of all, let us formulate which cases we have to consider. As we
concluded at the end of Section \ref{diff_scales}, the only new cases are such
that for both particles either $v_{c}\sim v_{e,t}$ or $v_{c}\rightarrow
v_{t}.$ Let us consider different subcases:

\subsubsection{1-st and 2-nd particles are near-subcritical}

In this case we can use (\ref{p_nsc_cond}) for both particles and get%
\begin{equation}
\gamma\approx\frac{\mathcal{X}_{1}}{2\mathcal{X}_{2}}\left(  1+\frac
{\mathcal{L}_{2}^{2}}{g_{\varphi\varphi}}\right)  +\frac{\mathcal{X}_{2}%
}{2\mathcal{X}_{1}}\left(  1+\frac{\mathcal{L}_{1}^{2}}{g_{\varphi\varphi}%
}\right)  .
\end{equation}

Additionally, there are 4 subcases.

\begin{itemize}
\item If both particles satisfy $v_{c}^{(1,2)}\sim v_{e,t}^{(1,2)}$,
(\ref{px_nsc}) holds for both particles, and $\gamma$ is given by two terms:
the first term is $\sim\frac{\delta_{1}}{\delta_{2}}$ and the second one is
$\sim\frac{\delta_{2}}{\delta_{1}}$. Using the fact that $\delta_{1,2}\sim
v_{c}^{s_{1,2}}$ for both particles, we can combine both terms and write
$\gamma\sim v_{c}^{-|s_{1}-s_{2}|}$.

\item If for the first particle $v_{c}^{(1)}\sim v_{e,t}^{(1)}$, while for the
second particle $v_{c}^{(2)}\rightarrow v_{t}^{(2)}$, we can use
(\ref{px_nsc}) for the first particle and (\ref{x_b_nsc}) for the second
particle. There are two terms: the first one is $\sim\frac{\delta_{1}}%
{v_{c}^{r_{2}}}$, while the second one is $\sim\frac{v_{c}^{r_{2}}}{\delta
_{1}}$. Since for the first particle $\delta_{1}\sim v_{c}^{s_{1}}$, we can
combine these two terms and get $\gamma\sim v_{c}^{-|s_{1}-r_{2}|}$.

\item If for the first particle $v_{c}^{(1)}\rightarrow v_{t}^{(1)}$, while
for the second particle $v_{c}^{(2)}\sim v_{e,t}^{(2)}$, we can use the
expressions from the previous subcase and write $\gamma\sim v_{c}%
^{-|r_{1}-s_{2}|}$.

\item If for both particles $v_{c}^{(1,2)}\rightarrow v_{t}^{(1,2)}$, we can
use (\ref{x_b_nsc}) for them. We have two terms: the first one is $\sim
\frac{v_{c}^{r_{1}}}{v_{c}^{r_{2}}}$, while the second one is $\sim\frac
{v_{c}^{r_{2}}}{v_{c}^{r_{1}}}$. We can combine them and write $\gamma\sim
v_{c}^{-|r_{1}-r_{2}|}$.
\end{itemize}

\subsubsection{1-st particle is NSC while 2-nd is NC, NUC or NOC}

For the first particle we can use (\ref{p_nsc_cond}) and write
\begin{equation}
\gamma\approx\frac{\mathcal{X}_{1}}{\kappa_{p}}\frac{[\mathcal{X}_{2}-P_{2}%
]}{v_{c}^{p}}+\frac{\left(  \frac{L_{H1}^{2}}{g_{\varphi\varphi}}+1\right)
}{2\mathcal{X}_{1}}P_{2}.
\end{equation}

Here, there are several cases.

\begin{itemize}
\item For both particles $v_{c}^{(1,2)}\sim v_{e,t}^{(1,2)}$. In this case, we
can use (\ref{px_nsc}) for the first particle and (\ref{p_nc_cond}),
(\ref{p_nuc_cond}), or (\ref{p_noc_cond}) for the second particle. (All of
these cases, in fact, give the same result because $P_{2}\sim\mathcal{X}%
_{2}\sim\delta_{2}$). Substituting this into the expressions for $\gamma$, we
obtain two terms: the first term is of the order $\sim\frac{\delta_{1}%
\delta_{2}}{v_{c}^{p}}$, while the second one is $\sim\frac{\delta_{2}}%
{\delta_{1}}$. Using the fact that $\delta_{1}\sim v_{c}^{s_{1}}$ for the
first particle and $\delta_{2}\sim v_{c}^{p/2}$ for the second particle, we
see that the expression for $\gamma$ has two terms: the first term is $\sim
v_{c}^{s_{1}-p/2}$, and the second term is $\sim v_{c}^{p/2-s_{1}}$. Since the
first particle is near-subcritical, it holds that $s_{1}<p/2$. From this fact,
we can deduce that the first term is dominant, and we have:%
\begin{equation}
\gamma\approx v_{c}^{-(p/2-s_{1})}.
\end{equation}

\item For the 1-st particle $v_{c}^{(1)}\rightarrow v_{t}^{(1)},$ while for
the second one $v_{c}^{(2)}\sim v_{e,t}^{(2)}.$ In this case for 1-st particle
condition (\ref{x_b_nsc}) holds and the behaviour of gamma factor may be
obtained by change $s_{1}\rightarrow r_{1}.$ Thus we have:%
\begin{equation}
\gamma\approx v_{c}^{-(p/2-r_{1})}.
\end{equation}

\qquad Note that taking limit $v_{c}^{(2)}\rightarrow v_{t}^{(2)}$ does not
change the relations $P_{2}\sim\mathcal{X}_{2}\sim\delta_{2}$. So this case
does not require special analysis.
\end{itemize}

\subsubsection{Both particles are NC, NUC or NOC}

According to Section \ref{iv_b}, the relations $\mathcal{X}_{1,2}\sim
P_{1,2}\sim\delta_{1,2}$ hold for both particles$.$ Substituting this in the
expression for $\gamma$ (\ref{gamma}), one has:%
\begin{equation}
\gamma\sim\frac{\delta_{1}\delta_{2}}{v_{c}^{p}}.
\end{equation}

Using that for NC, NUC, and NOC particles holds $\delta_{1,2}\sim v_{c}^{p/2}%
$, we can deduce that $\gamma=O(1)$.

Now, let us briefly formulate all the aforementioned conditions. To this end,
we introduce a parameter $d$ that characterizes how fast $\gamma$ changes as a
function of the collision point $v_{c}:\gamma\sim v_{c}^{-d}$. By analyzing
(\ref{gamma_1us_2_nsc}-\ref{gamma_1_cr_2nsc}) and (\ref{gamma_1_us_2_nc}%
-\ref{gamma_1_cr_2_nc}), we obtain the conditions summarized in Tables
\ref{gamma_delta_tab} and \ref{gamma_d_tab}.\begin{table}[ptb]%
\begin{tabular}
[c]{|c|c|c|c|}\hline
& First particle & Second particle & $d$\\\hline
1 & U or SC & NSC & $\left\vert s_{1}-s_{2}\right\vert $ or $\left\vert
s_{1}-r_{2}\right\vert $ if $\delta_{2}=-X_{s}^{(2)}v_{c}^{s}+B_{r}^{(2)}%
v_{c}^{r}$\\\hline
2 & C or UC & NSC & $\frac{p}{2}-s_{2}$ or $\frac{p}{2}-r_{2}$ if $\delta
_{2}=-X_{s}^{(2)}v_{c}^{s}+B_{r}^{(2)}v_{c}^{r}$\\\hline
3 & U or SC & NC, NUC or NOC & $\frac{p}{2}-s_{1}$\\\hline
4 & C or UC & NC, NUC or NOC & 0\\\hline
\end{tabular}
\caption{ Table showing behaviour of gamma factor for $v_{c}\sim v_{e,t}$ in a
case when 1-st particle is fine-tuned and 2-nd is near fine-tuned$.$ Here $d$
is defined by relation $\gamma\sim v_{c}^{-d}$}%
\label{gamma_delta_tab}%
\end{table}\begin{table}[ptb]%
\begin{tabular}
[c]{|c|c|c|c|}\hline
& First particle & Second particle & $d$\\\hline
1 & NSC & NSC & $|s_{1}-s_{2}|$\\
&  &  & $|r_{1}-s_{2}|$ if $\delta_{1}=-X_{s}^{(1)}v_{c}^{s}+B_{r}^{(1)}%
v_{c}^{r}$\\
&  &  & $|s_{1}-r_{2}|$ if $\delta_{2}=-X_{s}^{(2)}v_{c}^{s}+B_{r}^{(2)}%
v_{c}^{r}$\\
&  &  & $|r_{1}-r_{2}|$ if $\delta_{1,2}=-X_{s}^{(1,2)}v_{c}^{s}+B_{r}%
^{(1,2)}v_{c}^{r}$\\\hline
2 & NSC & NC, NUC or NOC & $\frac{p}{2}-s_{1}$\\\hline
3 & NC,NUC or NOC & NC, NUC or NOC & 0\\\hline
\end{tabular}
\caption{ Table showing behaviour of gamma factor for $v_{c}\sim v_{e,t}$ in a
case when both particles are near fine-tuned$.$ Here $d$ is defined by
relation $\gamma\sim v_{c}^{-d}$}%
\label{gamma_d_tab}%
\end{table}

Now, let us discuss the new scenarios that can be obtained. It is important to
note that for cases when $v_{c}\sim v_{e,t}$, the behavior of the gamma factor
is the same as for corresponding particles with $\delta=0$. The only
difference is the existence of near-overcritical particles, which do not have
any analog in the case of $\delta=0$, and the possibility of collisional
processes involving them. As we will show, such particles are very important
because they allow for high-energy collisions for non-extremal horizons.

There are also additional differences in cases when $v_{c}\rightarrow v_{t}$.
In these cases, the behavior of the gamma factor is described by different
expressions. For example, if two near-subcritical particles with $s_{1}=s_{2}$
participate in a collision, according to Table \ref{gamma_d_tab},
$\gamma=O(1)$. However, if we choose $\delta^{\prime}s$ for these particles in
such a way that $r_{1}\neq r_{2}$, then the gamma factor diverges. Thus, in
some cases, "fine-tuning" of the $\delta$'s makes previously forbidden BSW
effects possible.

\section{Behavior of acceleration: general approach\label{accel}}

Now we are going to analyze the forces acting on particles of different types.
In order to do this, at first we need to answer a question: in which frame do
we have to compute acceleration? Since the stationary frame is singular near
the horizon, we have to choose a frame that does not have this property. The
natural frame for this purpose is the one attached to a particle, known as the
FZAMO (free-falling \ zero angular momentum observer) frame. It is worth
noting that for near-fine-tuned particles, the radial velocity on the horizon
is not zero and the particle can cross the horizon, so we have to compute
acceleration in the FZAMO frame, unlike fine-tuned particles for which the
FZAMO frame is singular and acceleration has to be computed in the
non-singular OZAMO (orbital zero angular momentum observer) frame \cite{tz13}.
General definition of ZAMO and description of its properties is given in
\cite{72}.

To proceed further, let us use the expressions for acceleration in the tetrad
frame which are obtained in Appendix \ref{app_acel}.%
\begin{align}
a_{f}^{(t)}  &  =0,\\
a_{f}^{(r)}  &  =\frac{u^{r}}{\sqrt{\mathcal{X}^{2}-N^{2}}}(\partial
_{r}\mathcal{X+L}\partial_{r}\omega),\label{ar_fzamo}\\
a_{f}^{(\varphi)}  &  =\frac{1}{\sqrt{\mathcal{X}^{2}-N^{2}}}\frac{1}%
{\sqrt{g_{\varphi\varphi}}}\frac{\sqrt{A}}{N}\left[  (\mathcal{X}^{2}%
-N^{2})\partial_{r}\mathcal{L}-\mathcal{LX}(\partial_{r}\mathcal{X+L}%
\partial_{r}\omega)\right]  , \label{aphi_fzamo}%
\end{align}
where subscript $f$ means that the corresponding components of acceleration
are computed in the FZAMO frame. As one can see, there are only two non-zero
acceleration components and we are going to analyze at first $a_{f}^{(r)}$ to
understand the structure of acceleration and which terms are dominant for each
case discussed in Section \ref{diff_scales}. To this end, we use expansions
(\ref{an_exp}-\ref{om_exp}) and (\ref{X_us_2}) that give us%
\begin{equation}
a_{f}^{(r)}\approx\sigma\frac{P}{\sqrt{\mathcal{X}^{2}-N^{2}}}\sqrt
{\frac{A_{q}}{\kappa_{p}}}(X_{s}sv_{c}^{s+\frac{q-p}{2}-1}+L_{H}\omega
_{k}kv_{c}^{k+\frac{q-p}{2}-1}), \label{at}%
\end{equation}
where $\sigma=\pm1$ depending on whether the particle is outgoing or ingoing.
Our task is to consider different ranges of $v_{c}$ and find the behavior of acceleration.

\begin{itemize}
\item If $v_{c}\sim v_{e,t}$,%
\begin{equation}
a_{f}^{(r)}\approx(a_{f}^{(r)})_{m_{1}}\delta^{m_{1}},\text{ \ \ }%
a_{f}^{(\varphi)}\approx(a_{f}^{(\varphi)})_{m_{2}}\delta^{m_{2}}.
\label{a_expan_eb}%
\end{equation}

\item If $v_{c}\ll v_{e,t},$%
\begin{equation}
a_{f}^{(r)}\approx(a_{f}^{(r)})_{n_{1}}v_{c}^{n_{1}},\text{ \ \ }%
a_{f}^{(\varphi)}\approx(a_{f}^{(\varphi)})_{n_{2}}v_{c}^{n_{2}}.
\label{a_expan_es}%
\end{equation}

\item If there exists a turning point, in the limit $v_{c}\rightarrow v_{t}$
we have%
\begin{equation}
a_{f}^{(r)}\approx(a_{f}^{(r)})_{i_{1}}v_{c}^{i_{1}},\text{ \ \ }%
a_{f}^{(\varphi)}\approx(a_{f}^{(\varphi)})_{i_{2}}v_{c}^{i_{2}}.
\end{equation}

\end{itemize}

\bigskip

Our goal is to obtain relations between $m_{1},m_{2},n_{1},n_{2},i_{1},i_{2}$
and the type of a particle.

\bigskip

\section{Case $v_{c}\ll v_{e,t}\label{sec_eps<<1}$}

As we discussed in the analysis of the gamma factor (see Section
\ref{energy}), the case $v_{c}\ll v_{e,t}$ corresponds to usual particles.
However, in \cite{ov-zas-prd}, the acceleration for this case was analyzed
only qualitatively (the reason for this is that in \cite{ov-zas-prd}, the
authors were mainly focused on acceleration for fine-tuned particles, for
which the FZAMO frame is singular on the horizon). To fill this gap, we are
going to analyze acceleration of usual particles in the present work.

\subsection{General analysis of acceleration}

We start with the radial component of acceleration. First of all, we refer to
the fact that $P\sim\sqrt{\mathcal{X}^{2}-N^{2}}\sim\delta$ for $v_{c}\ll
v_{e,t}$ (to obtain this, one has to use (\ref{p_vc<<vt}) and
(\ref{sqrt_vc<<vt}). The dominant term $\mathcal{X}$ in these expressions has
the order $\delta$ on the horizon). Using this fact, we see that the prefactor
$\frac{P}{\sqrt{\mathcal{X}^{2}-N^{2}}}$ in (\ref{at}) is $=O(1)$. Thus, we
are left with the terms in brackets in (\ref{at}). One can see that the first
term is $\sim v_{c}^{s+\frac{q-p}{2}-1}$, and the second term is $\sim
v_{c}^{k+\frac{q-p}{2}-1}$. Comparing these terms with (\ref{a_expan_es}), we
get:%
\begin{equation}
n_{1}=\min(s,k)+\frac{q-p}{2}-1. \label{n1_eq_1}%
\end{equation}

However, this expression does not describe the most general case. It may
appear that the function $\omega=\omega_{H}$ is constant (that corresponds to
a static metric because there exists a corresponding coordinate transformation
$\widetilde{\varphi}=\varphi-\omega_{H}t$ that brings the metric to an
explicitly static form). In this case, the second term in (\ref{at}) is absent
and we obtain:%
\begin{equation}
n_{1}=s+\frac{q-p}{2}-1\text{ if }\omega=\omega_{H}. \label{n1_eq_2}%
\end{equation}

There is also a special case when the coefficients in the expansion of
$\mathcal{X}$ and $\mathcal{L}$ are such that several terms in the power
series (potentially divergent, generally speaking) in the expression for
acceleration cancel each other. Full cancellation happens, for example, for a
freely falling particle, provided $\mathcal{X}+\omega\mathcal{L}=\epsilon$,
where $\epsilon$ and $\mathcal{L}$ are constants that give us zero
acceleration. Then, $\partial_{r}(\mathcal{X}+\omega\mathcal{L})=0$ exactly.
In a more general case, we can consider%
\begin{equation}
\epsilon=\mathcal{X}+\omega\mathcal{L}=\epsilon_{0}+\text{terms of }%
v^{m}\text{ order, }m>k. \label{ener_spec_case}%
\end{equation}

Then, let us rewrite acceleration (\ref{ar_fzamo}) as a function of
$\epsilon:$%
\begin{equation}
a_{f}^{(r)}=\frac{u^{r}}{\sqrt{\mathcal{X}^{2}-N^{2}}}(\partial_{r}%
\epsilon\mathcal{-}\omega\partial_{r}\mathcal{L}).
\end{equation}

We have%
\begin{equation}
n_{1}=\min(m,b)+\frac{q-p}{2}-1\text{ in a case of (\ref{ener_spec_case}),}
\label{n1_eq_3}%
\end{equation}

(the quantity $b$ was defined in (\ref{L_exp})). However, we will not pay much
attention to this case in our further analysis.

Now let us invert all aforementioned conditions to obtain $s$ as a function of
$n_{1}$:%
\begin{equation}
s=\left\{
\begin{tabular}
[c]{l}%
$n_{1}+1+\frac{p-q}{2}\text{ if }0\leq n_{1}<k+\frac{q-p}{2}-1,$\\
$\text{may be any value }s\geq k\geq0\text{ if }n_{1}=k+\frac{q-p}{2}-1,$\\
$n_{1}+1+\frac{p-q}{2}\text{ for any }0\leq n_{1}\text{ if }\omega=\omega
_{H},$\\
$\text{may be any value in a case of (\ref{ener_spec_case}) (}n_{1}%
=\min(m,b)+\frac{q-p}{2}-1\text{).}$%
\end{tabular}
\ \ \ \ \ \ \right.  \label{s_on_n0}%
\end{equation}

Now, let us move to an angular component of acceleration. To analyze
finiteness of this component we note that it follows from (\ref{ar_fzamo}%
-\ref{aphi_fzamo}) and the fact that $u^{r}=\sigma\frac{\sqrt{A}}{N}P$ that
\begin{equation}
\mathcal{XL}a_{f}^{(r)}+\sigma P\sqrt{g_{\varphi\varphi}}a_{f}^{(\varphi
)}=\frac{\sqrt{A}}{N}P\sqrt{\mathcal{X}^{2}-N^{2}}\partial_{r}\mathcal{L}.
\label{ar_aphi_rel_1}%
\end{equation}

As in this section we are considering the near-horizon limit,
$\mathcal{X\approx}P\approx\sqrt{\mathcal{X}^{2}-N^{2}}\approx\delta.$ Thus we
can write in this limit%
\begin{equation}
a_{f}^{(\varphi)}+\sigma\mathcal{L}a_{f}^{(r)}=\delta\frac{\sqrt{A}}%
{N}\partial_{r}\mathcal{L}. \label{ar_aphi_rel}%
\end{equation}

We require that both $a_{f}^{(\varphi)}$ and $a_{f}^{(r)}$ be finite. From
(\ref{ar_aphi_rel}), it follows that if the acceleration components are
finite, then the left-hand side of (\ref{ar_aphi_rel}) is also finite. Because
the equality (\ref{ar_aphi_rel}) has to hold, the right-hand side also has to
be finite. If $q\geq p$, then the ratio $\frac{\sqrt{A}}{N}$ is finite on the
horizon, and the whole right-hand side is finite for any expansion of
$\mathcal{L}$. While if $q<p$, then the right-hand side behaves like%
\begin{equation}
\frac{\sqrt{A}}{N}\partial_{r}\mathcal{L\sim}v_{c}^{b+\frac{q-p}{2}-1},
\end{equation}

(here we used (\ref{L_exp})). This quantity is finite only for%

\begin{equation}
b\geq1+\frac{p-q}{2}. \label{b_cond}%
\end{equation}

Reversed statement thus can be easily proved by extraction of $a_{f}%
^{(\varphi)}$ $\ $from (\ref{ar_aphi_rel}):%

\[
a_{f}^{(\varphi)}=\delta\frac{\sqrt{A}}{N}\partial_{r}\mathcal{L-\sigma
L}a_{f}^{(r)}.
\]
If (\ref{b_cond}) holds and $a_{f}^{(r)}$ is finite, then $a_{f}^{(\varphi)}$
is also finite. This allows us to state a proposition:

\begin{proposition}
\label{propos_1}\bigskip If in case $v_{c}\ll v_{e,t}$ for some particle
$a_{f}^{(r)}$ is finite and condition (\ref{b_cond}) holds, then
$a_{f}^{(\varphi)}$ is also finite and vice versa.
\end{proposition}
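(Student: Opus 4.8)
The plan is to use the already-derived identity (\ref{ar_aphi_rel}) as the backbone of the argument. In the regime $v_c \ll v_{e,t}$ one has $\mathcal{X} \approx P \approx \sqrt{\mathcal{X}^2 - N^2} \approx \delta$, so this identity reads $a_f^{(\varphi)} + \sigma \mathcal{L} a_f^{(r)} = \delta \frac{\sqrt{A}}{N}\partial_r \mathcal{L}$: a \emph{linear} relation between the two tetrad acceleration components whose coefficient $\sigma \mathcal{L}$ tends to the finite value $\sigma L_H$ on the horizon. The proposition therefore reduces to controlling the single right-hand side; once that quantity is finite, $a_f^{(r)}$ and $a_f^{(\varphi)}$ differ by a finite amount, and finiteness of either one forces finiteness of the other.

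The key step is the near-horizon order-counting of the right-hand side $\delta \frac{\sqrt{A}}{N}\partial_r \mathcal{L}$. Inserting (\ref{an_exp}) gives $\sqrt{A}/N \sim v_c^{(q-p)/2}$, while differentiating (\ref{L_exp}) in $v = r - r_h$ gives $\partial_r \mathcal{L} \sim v_c^{b-1}$. Their product scales as $v_c^{b + (q-p)/2 - 1}$, which stays finite as $v_c \to 0$ exactly when the exponent is non-negative, i.e. $b \geq 1 + (p-q)/2$. This is precisely condition (\ref{b_cond}), and it is the unified threshold covering both $q \geq p$, where $\sqrt{A}/N$ is itself bounded, and the more restrictive $q < p$.

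Having shown the right-hand side finite under (\ref{b_cond}), both implications follow by solving the linear relation. For the forward direction I would write $a_f^{(\varphi)} = \delta \frac{\sqrt{A}}{N}\partial_r \mathcal{L} - \sigma \mathcal{L} a_f^{(r)}$; with $a_f^{(r)}$ finite and (\ref{b_cond}) in force, both terms on the right are finite, hence so is $a_f^{(\varphi)}$. For the converse I would solve for $a_f^{(r)} = (\sigma \mathcal{L})^{-1}(\delta \frac{\sqrt{A}}{N}\partial_r \mathcal{L} - a_f^{(\varphi)})$ and argue identically.

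The only point requiring genuine care, and hence the main obstacle, is the converse: dividing by $\sigma \mathcal{L}$ is legitimate only if $\mathcal{L}$ does not vanish on the horizon, i.e. $L_H \neq 0$, which I would record as a standing assumption since it is the generic situation for the particles considered. A secondary check is that the replacements $\mathcal{X} \approx P \approx \sqrt{\mathcal{X}^2 - N^2} \approx \delta$ used to pass from the exact relation (\ref{ar_aphi_rel_1}) to (\ref{ar_aphi_rel}) are uniform as $v_c \to 0$, so that no hidden divergence is smuggled in by the simplification.
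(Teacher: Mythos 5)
Your proof follows essentially the same route as the paper's: both rest on the linear relation (\ref{ar_aphi_rel}), the power counting $\frac{\sqrt{A}}{N}\partial_r\mathcal{L}\sim v_c^{b+\frac{q-p}{2}-1}$ yielding condition (\ref{b_cond}), and then solving the relation for one acceleration component in terms of the other. Your explicit caveat that the converse direction requires dividing by $\sigma\mathcal{L}$ and hence the standing assumption $L_H\neq 0$ is a genuine point of care that the paper leaves implicit, but it does not change the substance of the argument.
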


\subsection{Behavior of acceleration for different types of particles}

\bigskip In this subsection we are going to analyze, which types of particles
are compatible with finite acceleration.

\subsubsection{Near-subcritical particles}

We start with near-subcritical particles. We would like to remind a reader
that the defining condition for them is $0<s<p/2$. Let us begin with the first
solution in (\ref{s_on_n0}). By substituting the range of $s:0<s<p/2$, we
obtain the corresponding range for $n_{1}:$ $\frac{q-p}{2}-1<n_{1}<\frac
{q-2}{2}$. Now, let us determine how this correlates with the condition for
the existence of the first solution in (\ref{s_on_n0}): $n_{1}\leq
k+\frac{q-p-2}{2}$. For $k<\frac{p}{2}$, the condition $n_{1}\leq
k+\frac{q-p-2}{2}$ is stronger than $n_{1}<\frac{q-2}{2}$ (it is also worth
noting that, according to the first solution in (\ref{s_on_n0}), $s<k$ in this
case). However, for $k\geq\frac{p}{2}$, the condition $n_{1}<\frac{q-2}{2}$
becomes stronger. The lower bound for $n_{1}$ remains the same for all
positive $k$. Therefore, we can conclude that for the first solution in
(\ref{s_on_n0})%
\begin{align}
\frac{q-p}{2}-1  &  <n_{1}\leq k+\frac{q-p}{2}-1\text{ if }s<k<\frac{p}%
{2},\label{n1_1_nearsub}\\
\frac{q-p}{2}-1  &  <n_{1}<\frac{q-2}{2}\text{ if }k\geq\frac{p}{2}.
\label{n1_1_2_nearsub}%
\end{align}

The second solution in (\ref{s_on_n0}) gives any $s$ that is greater than $k$
with $n_{1}$ fixed by (\ref{n1_eq_2}): $n_{1}=k+\frac{q-p}{2}-1.$ This value
is non-negative if
\begin{equation}
k\geq\frac{p-q}{2}+1. \label{k_cond_subcr}%
\end{equation}

As we are specifically considering near-subcritical particles with $0<s<p/2$,
this solution is possible only if $0<k\leq s<p/2$. This condition necessitates
that $k<p/2$. By combining this condition with (\ref{k_cond_subcr}), we can
deduce that $q>2$ for this solution to exist. In all other cases, this
solution does not exist. To summarize the above findings regarding the
existence of the second solution in (\ref{s_on_n0}), we have:%
\begin{align}
n_{1}  &  =k+\frac{q-p}{2}-1\text{ if }0<k\leq s<p/2,\label{n1_2_nearsub}\\
\text{Second solution in (\ref{s_on_n0}) is absent if }0  &  <s<k<p/2\text{ or
if }k\geq p/2. \label{n1_2_2_nearsub}%
\end{align}

If the function $\omega$ is constant (the third solution in (\ref{s_on_n0})),
we obtain the same range within which $n_{1}$ can change. It is the first
solution in (\ref{s_on_n0}): $\frac{q-p}{2}-1<n_{1}<\frac{q-2}{2}$.

By combining these facts, we can write:%
\begin{equation}
\text{For near-subcritical particles holds }\left\{
\begin{tabular}
[c]{l}%
$\frac{q-p}{2}-1<n_{1}\leq k+\frac{q-p}{2}-1\text{ if }k<\frac{p}{2},$\\
$\frac{q-p}{2}-1<n_{1}<\frac{q-2}{2}\text{ if }k\geq\frac{p}{2}\text{ or if
}\omega=\omega_{H}.$%
\end{tabular}
\ \ \text{ }\right.  \text{ } \label{n1_nearsub_fin}%
\end{equation}

\subsubsection{Near-critical and near-ultracritical particles}

In these cases, when $s=\frac{p}{2}$ (but in the case of near-ultracritical
particles, the additional condition (\ref{ultra_crit_prop}) must also hold),
despite the difference in $P$ functions, the acceleration for them is the same
(because acceleration depends only on $s$ which is the same for the considered
cases, see (\ref{s_on_n0})).

If we substitute $s=p/2$ into the first solution in (\ref{s_on_n0}), it will
give us $n_{1}=$ $\frac{q-2}{2}$, so $n_{1}$ is non-negative if $q\geq2$. The
first solution exists if $n_{1}<k+\frac{q-p}{2}-1$ that requires $k>\frac
{p}{2}$.

The second solution is true for all $s=p/2\geq k$ and, similarly to the case
of near-subcritical particles, gives a non-negative $n_{1}$ only if
$k\geq\frac{p-q}{2}+1$ (in this case, $n_{1}=$ $k+\frac{q-p}{2}-1$).

If the function $\omega$ is constant (the third solution in (\ref{s_on_n0})),
we get $n_{1}=$ $\frac{q-2}{2}$ without any additional limitations.

We can combine all the aforementioned solutions and write:%
\begin{equation}
\text{For NC and NUC particles holds }\left\{  \text{%
\begin{tabular}
[c]{l}%
$n_{1}=k+\frac{q-p-2}{2}\text{ if }k\leq\frac{p}{2},$\\
$n_{1}=\frac{q-2}{2}\text{ if }k>\frac{p}{2}$ or $\text{if }\omega=\omega
_{H}.$%
\end{tabular}
}\right.  \text{ } \label{n1_nearcr_fin}%
\end{equation}

\subsubsection{Near-overcritical particles}

For near-overcritical particles $s>\frac{p}{2}.$ The first solution in
(\ref{s_on_n0}) gives us $n_{1}>$ $\frac{q-2}{2}.$ As the first solution
exists for $n_{1}<k+\frac{q-p}{2}-1$ only, we have:%
\begin{equation}%
\begin{tabular}
[c]{l}%
$\frac{q-2}{2}<n_{1}<k+\frac{q-p}{2}-1\text{ if }k>\frac{p}{2}$\\
$\text{First solution is impossible if }k\leq\frac{p}{2}$%
\end{tabular}
\ \label{n1_1_1_nearov}%
\end{equation}

Substituting the inequalities $\frac{q-2}{2}<n_{1}<k+\frac{q-p}{2}-1$ back to
(\ref{s_on_n0}), one gets that the 1st solution in (\ref{s_on_n0}) requires
$p/2<s<k$.

The second solution in (\ref{s_on_n0}) allows all $s\geq k$ and is only
possible if $k\geq\frac{p-q}{2}+1$ (in which case $n_{1}=$ $k+\frac{q-p}{2}-1$).

The third solution simply gives us $\frac{q-2}{2}<n_{1}$ without any upper limitations.

Now let us join all the aforementioned solutions. Let us consider the case of
$k>p/2$. Depending on whether $s<k$ or $s\geq k$, we get different solutions.
In the case $s<k$, we have $\frac{q-2}{2}<n_{1}<k+\frac{q-p}{2}-1$ (see
(\ref{n1_1_1_nearov}) and the discussion after it), while in the case $s\geq
k$, we have $n_{1}=$ $k+\frac{q-p}{2}-1$. We can join these conditions and
write $\frac{q-2}{2}<n_{1}\leq k+\frac{q-p}{2}-1$ for any $k>p/2$,
independently of whether $s$ is greater or smaller than $k$. In the case
$k\leq p/2$, the situation is simpler: solution (\ref{n1_1_1_nearov}) is
absent, and we are left only with the second solution in (\ref{s_on_n0}):
$n_{1}=$ $k+\frac{q-p}{2}-1$. By joining all these cases, we can write:%
\begin{equation}
\text{For near-overcritical particles holds}\left\{  \text{%
\begin{tabular}
[c]{l}%
$\frac{q-2}{2}<n_{1}\leq k+\frac{q-p}{2}-1\text{ if }k>\frac{p}{2},$\\
$n_{1}=k+\frac{q-p}{2}-1$ if $k\leq p/2,$\\
$n_{1}>\frac{q-2}{2}\text{ if }\omega=\omega_{H}.$%
\end{tabular}
}\right.  \text{ } \label{n1_nearover_fin}%
\end{equation}

\subsection{Different $k$ regions}

To reformulate conditions obtained in previous subsections, first of all we
note that the existence of solutions is defined only by different values of
$k$. To simplify analysis further, we introduce different $k$ regions as we
did in Section VII.E in \cite{ov-zas-prd}.%
\begin{equation}%
\begin{tabular}
[c]{l}%
$0<k<\frac{p-q}{2}+1,\text{ \ \ region I,}$\\
$\frac{p-q}{2}+1\leq k<\frac{p+1-q/2}{2},\text{ \ \ region II,}$\\
$\frac{p+1-q/2}{2}\leq k<\frac{p}{2},\text{ \ \ \ region III,}$\\
$k\geq\frac{p}{2},\text{ \ \ region IV.}$%
\end{tabular}
\ \ \ \label{k_reg_q>2}%
\end{equation}
\qquad

As was discussed in \cite{ov-zas-prd}, all these regions exist and do not
intersect if $q>2.$ However if $q\leq2,$ then classification in these cases
has to be introduced differently:
\begin{equation}%
\begin{tabular}
[c]{l}%
$0<k<\frac{p-q}{2}+1,\text{ \ \ region I,}$\\
$k\geq\frac{p-q}{2}+1,\text{ \ \ region IV.}$%
\end{tabular}
\ \ \label{k_reg_q<2}%
\end{equation}

Let us start our analysis with the stationary metric (where $\omega$ is
non-constant). We see that in region I $n_{1}$ is negative for any type of
particle because in this case it follows from (\ref{n1_nearsub_fin}),
(\ref{n1_nearcr_fin}) and (\ref{n1_nearover_fin}) that for any type of
particle, $n_{1}$ is either limited by the value $k+\frac{q-p-2}{2}$ or is
equal to it. However, as $k+\frac{q-p-2}{2}$ is negative in region I, $n_{1}$
is also negative).

In regions II and III (where $\frac{p-q}{2}+1\leq k<\frac{p}{2}$) we see that
for near-subcritical particles, we have to choose the 1st solution in
(\ref{n1_nearsub_fin}) that gives us $\frac{q-p-2}{2}<n_{1}\leq k+\frac
{q-p-2}{2}$. For near-critical and near-ultracritical particles, we take the
1st solution in (\ref{n1_nearcr_fin}). For near-overcritical particles, we
take the 2-nd solution in (\ref{n1_nearover_fin}) that gives us $n_{1}%
=k+\frac{q-p}{2}-1$ for all these particles.

In region IV (where $k\geq\frac{p}{2}$), for near-subcritical particles, we
use the 2nd solution in (\ref{n1_nearsub_fin}) that gives us $\frac{q-p-2}%
{2}<n_{1}<\frac{q-2}{2}$. To include non-negative values of $n_{1}$ in this
region, we have to require $q>2$. Otherwise, accelerations for
near-subcritical particles diverge in region IV.

For near-critical and near-ultracritical particles we use either 1-st solution
in (\ref{n1_nearcr_fin}) (if $k=p/2$) that entails $n_{1}=\frac{q-2}{2}$ or we
use 2-nd solution in (\ref{n1_nearcr_fin}) that leads to the same value
$n_{1}=\frac{q-2}{2}$. The acceleration in these cases is non-negative only if
$q\geq2$.

For near-overcritical particles, we use either the 1st solution in
(\ref{n1_nearover_fin}) that gives us $\frac{q-2}{2}<n_{1}<k+\frac{q-p}{2}-1$
(this solution is true if $\frac{p}{2}<s<k$), or we use the 2nd solution in
(\ref{n1_nearover_fin}) that gives us $n_{1}=k+\frac{q-p}{2}-1$ (if $s\geq k.$
This solution in region IV is presented only if $k=\frac{p}{2}$). These two
solutions in region IV can be joined that gives us for near-overcritical
particles $\frac{q-2}{2}<n_{1}\leq k+\frac{q-p}{2}-1$ (independently of
whether $s$ or $k$ is greater). This condition does not have further
limitations if $q>2$.

If $q\leq2$, the lower bound for $n_{1}$ is negative. The upper bound for
$n_{1}$ (that is the same as in the case $q>2$) is positive, with the
reservation about redefinition of $k$ regions in this case (see
(\ref{k_reg_q<2})).

For a constant $\omega$, we observe that near-subcritical, near-critical, and
near-ultracritical particles experience the same accelerations as non-constant
$\omega$ in region IV (one can refer to 2nd solutions in (\ref{n1_nearsub_fin}%
) and (\ref{n1_nearcr_fin}). The only difference occurs with near-overcritical
particles, where in the case of constant $\omega,$ $n_{1}>\frac{q-2}{2}$ (see
3rd solution (\ref{n1_nearover_fin})).

We summarize all the aforementioned results in Tables \ref{tab_1},\ref{tab_2},
and \ref{tab_3}.

\begin{table}[ptb]%
\begin{tabular}
[c]{|c|c|c|c|c|}\hline
& $k$ region & $n_{1}$ range & $s$ & Type of trajectory\\\hline
\multicolumn{5}{|c|}{Stationary metric}\\\hline
1 & I & \multicolumn{3}{|c|}{For any type of trajectory $n_{1}$ is negative
(forces diverge)}\\\hline
2 & II and III & $\max\left(  0,\frac{q-p}{2}-1\right)  <n_{1}\leq
k+\frac{q-p}{2}-1$ & 1-st and 2-nd in (\ref{s_on_n0}) & NSC\\\cline{3-5}
&  & $n_{1}=k+\frac{q-p}{2}-1$ & 2-nd in (\ref{s_on_n0}) & NC and
NOC\\\cline{3-5}
&  & $n_{1}=k+\frac{q-p}{2}-1$ and (\ref{ultra_crit_prop}) & 2-nd in
(\ref{s_on_n0}) & NUC\\\hline
3 & IV & $\max\left(  0,\frac{q-p}{2}-1\right)  <n_{1}<\frac{q-2}{2}$ & 1-st
in (\ref{s_on_n0}) & NSC\\\cline{3-5}
&  & $n_{1}=\frac{q-2}{2}$ & 1-st in (\ref{s_on_n0}) & NC\\\cline{3-5}
&  & $n_{1}=\frac{q-2}{2}$ and (\ref{ultra_crit_prop}) & 1-st in
(\ref{s_on_n0}) & NUC\\\cline{3-5}
&  & $\frac{q-2}{2}<n_{1}\leq k+\frac{q-p}{2}-1$ & 1-st and 2-nd in
(\ref{s_on_n0}) & NOC\\\hline
\multicolumn{5}{|c|}{Static metric}\\\hline
4 & $k=0$ & \multicolumn{2}{|c|}{Same results as in IV for stationary metric}
& NSC, NC and NUC\\\cline{3-5}
&  & $n_{1}>\frac{q-2}{2}$ & 3-rd in (\ref{s_on_n0}) & NOC\\\hline
\end{tabular}
\caption{ Classification of near-horizon trajectories for different k regions
for $q>2$ (ultraextremal horizon). Abbreviations mean: NSC-mear-subcritical,
NC-near-critical, NUC-near-ultracritical, NOC-near-overcritical. The fourth
solution in (\ref{s_on_n0}) is not presented in this table. }%
\label{tab_1}%
\end{table}

\begin{table}[ptb]%
\begin{tabular}
[c]{|c|c|c|c|c|}\hline
& $k$ region & $n_{1}$ range & $s$ & Type of trajectory\\\hline
\multicolumn{5}{|c|}{Stationary metric}\\\hline
1 & I & \multicolumn{3}{|c|}{For any type of trajectory $n_{1}$ is negative
(forces diferge)}\\\hline
2 & IV & $n_{1}=0$ & 1-st in (\ref{s_on_n0}) & NC\\\cline{3-5}
&  & $n_{1}=0$ and (\ref{ultra_crit_prop}) & 1-st in (\ref{s_on_n0}) &
NUC\\\cline{3-5}
&  & $0\leq n_{1}\leq k+\frac{q-p}{2}-1$ & 1-st and 2-nd in (\ref{s_on_n0}) &
NOC\\\hline
\multicolumn{5}{|c|}{Static metric}\\\hline
3 & $k=0$ & \multicolumn{2}{|c|}{Same results as in IV for stationary metric}
& NC and NUC\\\cline{3-5}
&  & $n_{1}>0$ & 3-rd in (\ref{s_on_n0}) & NOC\\\hline
\end{tabular}
\caption{ Classification of near-horizon trajectories for different k regions
for $q=2$ (ultraextremal horizon). Abbreviations mean: NSC-mear-subcritical,
NC-near-critical, NUC-near-ultracritical, NOC-near-overcritical. The fourth
solution in (\ref{s_on_n0}) is not presented in this table. }%
\label{tab_2}%
\end{table}

\begin{table}[ptb]%
\begin{tabular}
[c]{|c|c|c|c|c|}\hline
& $k$ region & $n_{1}$ range & $s$ & Type of trajectory\\\hline
\multicolumn{5}{|c|}{Stationary metric}\\\hline
1 & I & \multicolumn{3}{|c|}{For any type of trajectory $n_{1}$ is negative
(forces diverge)}\\\hline
2 & IV & $0<n_{1}\leq k+\frac{q-p}{2}-1$ & 1-st and 2-nd in (\ref{s_on_n0}) &
NOC\\\hline
\multicolumn{5}{|c|}{Static metric}\\\hline
3 & $k=0$ & $0<n_{1}$ & 3-rd in (\ref{s_on_n0}) & NOC\\\hline
\end{tabular}
\caption{ Classification of near-horizon trajectories for different k regions
for $q<2$ (ultraextremal horizon). Abbreviations mean: NSC-mear-subcritical,
NC-near-critical, NUC-near-ultracritical, NOC-near-overcritical. The fourth
solution in (\ref{s_on_n0}) is not presented in this table. }%
\label{tab_3}%
\end{table}

Also note that results obtained in this section include also the case of usual
particles (with arbitrary $\delta=O(1)$).

\section{Case $v_{c}\sim v_{e,t}$}

As we already mentioned, in this case in the main approximation acceleration reads%

\begin{equation}
a_{f}^{(r)}\approx(a_{f}^{(r)})_{m_{1}}v_{c}^{m_{1}},\text{ \ \ }%
a_{f}^{(\varphi)}\approx(a_{f}^{(\varphi)})_{m_{2}}v_{c}^{m_{2}}.
\end{equation}

Our task is to find $m_{1}$ and $m_{2}$ depending on the particle type. We
begin with the radial component of acceleration, $a_{f}^{(r)}$. At first, let
us discuss the prefactor in the expression for (\ref{at}). As stated in
Section \ref{iv_b}, for all types of particles, $P\sim\sqrt{\mathcal{X}%
^{2}-N^{2}}\sim\delta$ (note that this applies to the $v_{c}\sim v_{e,t}$
case, but not to the limit $v_{c}\rightarrow v_{t}$). Thus, the prefactor
$\frac{P}{\sqrt{\mathcal{X}^{2}-N^{2}}}=O(1)$ and can be ignored in this analysis.

In the expression for $a_{f}^{(r)}$ (\ref{at}), there are two terms dependent
on $v_{c}$: the first term is $\sim v_{c}^{s+\frac{q-p}{2}-1}$, and the second
term is $\sim v_{c}^{k+\frac{q-p}{2}-1}$. Therefore, the analysis of the
radial component of acceleration is the same as for $v_{c}\ll v_{e,t}$, and we
can use the results obtained in the previous section.

The only differences are related to the angular component of acceleration. In
this case, $\mathcal{X}\sim\sqrt{\mathcal{X}^{2}-N^{2}}\sim P\sim\delta$, and
we can use equation (\ref{ar_aphi_rel}):%
\begin{equation}
a_{f}^{(\varphi)}+\sigma\mathcal{L}a_{f}^{(r)}\sim\delta\frac{\sqrt{A}}%
{N}\partial_{r}\mathcal{L}.
\end{equation}

Relying on this condition, let us determine when $a_{f}^{(\varphi)}$ is finite
. The right-hand side (RHS) of this condition is of the order of $\delta
v_{c}^{b+\frac{q-p}{2}-1}$. It is evident that if condition (\ref{b_cond}) is
satisfied, then the RHS is also finite. Therefore, we can conclude that for
the range $v_{c}\sim v_{e,t}$, if $a_{f}^{(r)}$ is finite and condition
(\ref{b_cond}) is met, then $a_{f}^{(\varphi)}$ is also finite (see
Proposition \ref{propos_1}).

Considering that the expression for $n_{1}$ is the same for both cases
$v_{c}\sim v_{e,t}$ and $v_{c}\ll v_{e,t}$, we can write:

\begin{proposition}
If acceleration is finite for $v_{c}\ll v_{e,t}$, it is finite for $v_{c}\sim
v_{e,t}.$
\end{proposition}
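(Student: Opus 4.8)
The plan is to reduce the statement to two regime-independent facts already isolated in the preceding two sections: the leading power governing $a_{f}^{(r)}$ is the same in both regimes, and the angular component is slaved to the radial one through the algebraic identity (\ref{ar_aphi_rel}). I would not recompute any acceleration; instead I would transport the finiteness data from $v_{c}\ll v_{e,t}$ to $v_{c}\sim v_{e,t}$.

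First I would handle $a_{f}^{(r)}$. The crucial input is that $P\sim\sqrt{\mathcal{X}^{2}-N^{2}}\sim\delta$ holds in both regimes: for $v_{c}\ll v_{e,t}$ this is (\ref{p_vc<<vt})--(\ref{sqrt_vc<<vt}) of Section \ref{iv_c}, and for $v_{c}\sim v_{e,t}$ it is established for every particle type in Section \ref{iv_b}. Hence the prefactor $P/\sqrt{\mathcal{X}^{2}-N^{2}}$ in (\ref{at}) equals $O(1)$ in either case and cannot change the leading power. Since the two bracketed terms in (\ref{at}) are fixed by the regime-independent Taylor data for $\mathcal{X}$, $\mathcal{L}$, $\omega$, $A$ and $N$, they carry the exponents $s+\frac{q-p}{2}-1$ and $k+\frac{q-p}{2}-1$ irrespective of the regime (and the same is true for the two exceptional situations, constant $\omega$ and the cancellation (\ref{ener_spec_case}), since those formulas for $n_{1}$ were derived from Taylor data alone). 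Therefore the radial exponent is the same number in both regimes, $m_{1}=n_{1}$, and finiteness of $a_{f}^{(r)}$ for $v_{c}\ll v_{e,t}$, i.e. $n_{1}\geq0$, gives at once finiteness of $a_{f}^{(r)}$ for $v_{c}\sim v_{e,t}$.

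Next I would treat $a_{f}^{(\varphi)}$ through (\ref{ar_aphi_rel}), which reads $a_{f}^{(\varphi)}+\sigma\mathcal{L}a_{f}^{(r)}=\delta\frac{\sqrt{A}}{N}\partial_{r}\mathcal{L}$ in both regimes, with a right-hand side $\sim\delta\,v_{c}^{b+\frac{q-p}{2}-1}$ whose finiteness is governed by the same, regime-independent condition (\ref{b_cond}) (and is automatic when $q\geq p$, where $\sqrt{A}/N$ is finite on the horizon). From the hypothesis, both $a_{f}^{(r)}$ and $a_{f}^{(\varphi)}$ are finite for $v_{c}\ll v_{e,t}$; since $\mathcal{L}\to L_{H}$ is bounded, the left-hand side of (\ref{ar_aphi_rel}) is finite there, so its right-hand side is finite and hence (\ref{b_cond}) holds (this is the ``vice versa'' half of Proposition \ref{propos_1}). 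As (\ref{b_cond}) is a condition on the exponent $b$ alone, it continues to hold for $v_{c}\sim v_{e,t}$, where the right-hand side of (\ref{ar_aphi_rel}) is therefore again finite; combined with the finiteness of $a_{f}^{(r)}$ just established and the boundedness of $\mathcal{L}$, solving $a_{f}^{(\varphi)}=\delta\frac{\sqrt{A}}{N}\partial_{r}\mathcal{L}-\sigma\mathcal{L}a_{f}^{(r)}$ shows $a_{f}^{(\varphi)}$ is finite. Thus the full acceleration is finite for $v_{c}\sim v_{e,t}$.

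The step I expect to be the genuine content is the identity $m_{1}=n_{1}$, because it hinges on the prefactor remaining $O(1)$, which in turn requires $P\sim\sqrt{\mathcal{X}^{2}-N^{2}}$ in the target regime. This is exactly the property that breaks down in the separate turning-point limit $v_{c}\rightarrow v_{t}$ (where $P\rightarrow0$ but $\sqrt{\mathcal{X}^{2}-N^{2}}$ need not vanish), which is why the proposition is deliberately restricted to $v_{c}\sim v_{e,t}$ and asserts nothing about $v_{c}\rightarrow v_{t}$. Everything else is bookkeeping: matching the exponents already tabulated in Section \ref{sec_eps<<1} and invoking the regime-independence of (\ref{b_cond}).
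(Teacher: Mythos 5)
Your proposal is correct and follows essentially the same route as the paper: the prefactor $P/\sqrt{\mathcal{X}^{2}-N^{2}}$ is $O(1)$ in both regimes because $P\sim\sqrt{\mathcal{X}^{2}-N^{2}}\sim\delta$, so the radial exponent $m_{1}=n_{1}$ is unchanged, and the angular component is then controlled through the identity (\ref{ar_aphi_rel}) together with the regime-independent condition (\ref{b_cond}). Your only addition is to make explicit that (\ref{b_cond}) (or its automatic validity when $q\geq p$) is extracted from the hypothesis via the converse half of Proposition \ref{propos_1}, which the paper leaves implicit.
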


This proposition is very useful because we can use the results already
obtained for $v_{c}\ll v_{e,t}.$

\section{Case $v_{c}\rightarrow v_{t}$}

If a particle is such that a turning point exists, the quantity $P$ tends to
zero when this point is \ approached. However, for different types of
particles, it tends to zero at different rates. Therefore, we will consider
each of these cases separately.

\subsection{Near-subcritical particles}

For such particles we use (\ref{x_b_nsc}-\ref{sqrt_nsc}) and see that in the
leading order%
\begin{equation}
P\approx\sqrt{\mathcal{X}^{2}-N^{2}}\approx\mathcal{X\approx}B_{r}v_{c}^{r}.
\end{equation}

Thus prefactor $\frac{P}{\sqrt{\mathcal{X}^{2}-N^{2}}}$ in expression
(\ref{at}) is $=O(1).$ This means that for such particles acceleration for
$v_{c}\rightarrow v_{t}$ case is given by the same expressions as for the case
of $v_{c}\sim v_{t}.$

\subsection{Near-critical, near-ultracritical and near-overcritical particles}

All these cases are similar in the sense that in the leading order (see
(\ref{x_nc_cond}), (\ref{delta_spec_cond_cr}), (\ref{p_b_cr}),
(\ref{delta_spec_cond_ucr}), (\ref{p_b_ultra}), (\ref{delta_spec_cond_ocr})
and (\ref{p_b_over})) we can write:%
\begin{equation}
\mathcal{X}\sim\sqrt{\mathcal{X}^{2}-N^{2}}\sim v_{c}^{p/2},\text{ \ \ }P\sim
v_{c}^{p/4+r/2}.
\end{equation}

Now, let us analyze the radial component of acceleration. It can be observed
that the prefactor $\frac{P}{\sqrt{\mathcal{X}^{2}-N^{2}}}$ in the expression
(\ref{at}) is approximately $\sim v_{c}^{(\frac{r}{2}-\frac{p}{4})}$. Note
that, by definition, $r>p/2$, so this prefactor does not diverge. The
structure of the terms in brackets is the same as in the previously analyzed
cases of $v_{c}\ll v_{e,t}$ and $v_{c}\sim v_{e,t}$. Therefore, we can
conclude that if the acceleration is finite in these two cases, it will also
be finite in the case of $v_{c}\rightarrow v_{t}$.

Next, we analyze the angular component (this analysis is independent of the
type of particle). To this end, we will use the expression (\ref{aphi_fzamo}):%
\begin{equation}
a_{f}^{(\varphi)}=\frac{1}{\sqrt{\mathcal{X}^{2}-N^{2}}}\frac{1}%
{\sqrt{g_{\varphi\varphi}}}\frac{\sqrt{A}}{N}\left[  (\mathcal{X}^{2}%
-N^{2})\partial_{r}\mathcal{L}-\mathcal{LX}(\partial_{r}\mathcal{X+L}%
\partial_{r}\omega)\right]  . \label{af_phi}%
\end{equation}

Using that in the limit $v_{c}\rightarrow v_{t}$ quantities $\sqrt
{\mathcal{X}^{2}-N^{2}}\sim\mathcal{X}$ are of order of $\delta$ (this is true
because $v_{t}$ is defined by condition $\mathcal{X}^{2}=(1+\frac
{\mathcal{L}^{2}}{g_{\varphi\varphi}})N^{2},$ while $\mathcal{X}$ and $N$
remain $\sim\delta$), and using expansions (\ref{an_exp}) and (\ref{L_exp}),
we have three terms in (\ref{af_phi}). The first term is $\sim\delta
v_{c}^{b+\frac{q-p}{2}-1}$, the second term is $\sim v_{c}^{s+\frac{q-p}{2}%
-1}$, and the third term is $\sim v_{c}^{k+\frac{q-p}{2}-1}$. The second and
third terms are finite if $\min(s,k)+\frac{q-p}{2}-1$ is non-negative. Note
that this expression is the same as $n_{1}$ in the case $v_{c}\ll v_{e,t}$
given by (\ref{n1_eq_1}). So, if the radial component of the acceleration is
finite in the case $v_{c}\ll v_{e,t}$, then the second and third terms are
also finite. The first term is finite if (\ref{b_cond}) is satisfied that is a
necessary condition for the finiteness of the angular component of the
acceleration in the case $v_{c}\ll v_{e,t}$. Therefore, we can state the proposition:

\begin{proposition}
If acceleration is finite for $v_{c}\ll v_{e,t}$, it is finite for
$v_{c}\rightarrow v_{t}.$
\end{proposition}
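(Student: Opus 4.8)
The plan is to treat the radial and angular components of the acceleration separately, in each case reducing the $v_c \to v_t$ behavior to the already-controlled $v_c \ll v_{e,t}$ behavior. The finiteness hypothesis for $v_c \ll v_{e,t}$ means precisely two things: the radial component is finite, i.e. $n_1 = \min(s,k) + \frac{q-p}{2} - 1 \geq 0$ (see (\ref{n1_eq_1})), and, by Proposition \ref{propos_1}, condition (\ref{b_cond}) holds so that the angular component is finite as well. I would first record the leading-order scalings near the turning point established in the two preceding subsections: for near-subcritical particles $P \sim \sqrt{\mathcal{X}^2 - N^2} \sim \mathcal{X} \sim B_r v_c^r$, whereas for near-critical, near-ultracritical and near-overcritical particles $\mathcal{X} \sim \sqrt{\mathcal{X}^2 - N^2} \sim v_c^{p/2}$ while $P \sim v_c^{p/4 + r/2}$.

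For the radial component I would start from (\ref{at}) and isolate the prefactor $P/\sqrt{\mathcal{X}^2 - N^2}$. Using the scalings above, this prefactor is $O(1)$ for near-subcritical particles and $\sim v_c^{r/2 - p/4}$ for the remaining types; since $r > p/2$ by definition, the exponent $r/2 - p/4$ is positive, so in every case the prefactor stays bounded (it does not diverge) as $v_c \to v_t$. The two $v_c$-dependent factors in brackets in (\ref{at}) are literally the same as in the $v_c \ll v_{e,t}$ analysis, hence their worst power is again $\min(s,k) + \frac{q-p}{2} - 1 = n_1$. Therefore $a_f^{(r)}$ is finite at the turning point whenever $n_1 \geq 0$, i.e. whenever it was finite for $v_c \ll v_{e,t}$.

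For the angular component the naive route fails, and this is the main subtlety: at the turning point $P \to 0$ but $\sqrt{\mathcal{X}^2 - N^2} \sim \delta$ remains nonzero (because $v_t$ is fixed by $\mathcal{X}^2 = (1 + \mathcal{L}^2/g_{\varphi\varphi})N^2$ while $\mathcal{X}$ and $N$ are each of order $\delta$), so one cannot simply transport the $P$-based radial estimate. Instead I would work directly with the unexpanded expression (\ref{af_phi}), insert $\sqrt{\mathcal{X}^2 - N^2} \sim \mathcal{X} \sim \delta$, and expand the three bracketed pieces with (\ref{an_exp}) and (\ref{L_exp}). This yields a first term of order $\delta\, v_c^{b + \frac{q-p}{2} - 1}$, a second of order $v_c^{s + \frac{q-p}{2} - 1}$, and a third of order $v_c^{k + \frac{q-p}{2} - 1}$. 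The second and third are finite exactly when $\min(s,k) + \frac{q-p}{2} - 1 = n_1 \geq 0$, and the first is finite exactly under (\ref{b_cond})---which are precisely the two finiteness conditions inherited from the $v_c \ll v_{e,t}$ case.

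Combining the two components gives the proposition: the hypotheses $n_1 \geq 0$ and (\ref{b_cond}) force all the relevant powers of $v_c$ to be non-negative, so both $a_f^{(r)}$ and $a_f^{(\varphi)}$ stay finite as $v_c \to v_t$. I expect the only genuine obstacle to be the bookkeeping of the prefactor $P/\sqrt{\mathcal{X}^2 - N^2}$: one must confirm that $r > p/2$ keeps its exponent non-negative for the near-critical family, and, for the angular part, recognize that $\sqrt{\mathcal{X}^2 - N^2}$ is bounded below by a quantity of order $\delta$ so that dividing by it cannot manufacture a divergence. Everything else is a routine comparison of powers of $v_c$ against those already tabulated for $v_c \ll v_{e,t}$.
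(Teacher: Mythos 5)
Your argument reproduces the paper's proof essentially verbatim: the same case split on the prefactor $P/\sqrt{\mathcal{X}^{2}-N^{2}}$ (order one for near-subcritical particles, $\sim v_{c}^{r/2-p/4}$ with $r>p/2$ for the rest) for the radial component, and the same three-term expansion of (\ref{af_phi}) reducing finiteness of the angular component to $n_{1}\geq0$ together with (\ref{b_cond}). The proposal is correct and takes the same approach as the paper.
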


\section{Results for different types of horizons\label{sec_dif_types}}

In this section we formulate briefly the results obtained for different
particles near different types of horizons.\newline To this end, we analyze
accelerations for $v_{c}\ll v_{e,t}$, because, as we showed above (see
Propositions 1, 2 and 3), if acceleration is finite in this range, it will be
finite in all other ranges.

\subsubsection{Non-extremal horizons}

Non-extremal horizons are such that $q=p=1$ (for explanation, why we use such
a definition, and other properties of non-extremal horizons see
\cite{ov-zas-gc}). All results corresponding to this case may be found in
Table \ref{tab_3}. One can see that in this case (unlike that of fine-tuned
particle) it is possible to have particles that experience a finite force near
the horizon: such particles are near-overcritical ones in range IV (or for
static spacetimes). This correlates with the result obtained in \cite{zas20}
where such a situation was demonstrated explicitly for the Schwarzchild
space-time. According to our classification, such particles are
near-overcritical ones that, as we already discussed it above, do not have a
fine-tuned analogue.

\subsubsection{Extremal horizons}

Extremal horizons are such that $q=2$ and $p>q.$ All results corresponding to
this case may be found in Table \ref{tab_2}. From this Table, we see that
finite forces act only on near-critical, near-ultracritical and
near-overcritical particles in region IV or in static metric.

\subsubsection{Ultraextremal horizon}

Ultraextremal horizons are such that $q>2.$ All results corresponding to this
case may be found in Table \ref{tab_1}. One can see that in this case a force
is finite for all types of particles if $k$ is in regions II, III or IV or if
space-time is static.

We summarize all the aforementioned results in the Table \ref{tab_6}.

\begin{table}[ptb]%
\begin{tabular}
[c]{|c|c|c|c|}\hline
& Type of horizon & Type of trajectory & Region of $k$\\\hline
1 & Non-extremal & NOC & IV or static\\\hline
2 & Extremal & NC, NUC and NOC & IV or static\\\hline
3 & Ultraextremal & NSC, NC, NUC and NOC & II, III, IV or static\\\hline
\end{tabular}
\caption{ Classification of cases when forces are finite for different types
of horizons and trajectories.}%
\label{tab_6}%
\end{table}

\section{Particles with finite proper time: is kinematic censorship
preserved?}

In this section we are going to probe the so-called principle of kinematic
censorship (KC). It excludes unphysical situations in which the energy
released in a collisional event in a regular system is infinite in a literal
sense \cite{cens}. Something should prevent such an event. For example, the
proper time to the horizon can diverge \cite{ted}, so collision with infinite
$E_{c.m.}$ never occurs, although it can be made as large as one likes. What
happens if the corresponding proper time is finite and $E_{c.m.}$ is infinite?
In Sec. IX of our previous work \cite{ov-zas-prd} we showed that for
fine-tuned particles this is possible in two cases only: either 1) a force,
acting on such particles is infinite or 2) or the horizon fails to be regular.
As the system becomes singular in a geometrical or dynamic sense, the KC does
not apply to it and no contradiction arises. Now, we are going to prove that
this principle is preserved for near-fine-tuned particles. However, this
requires consideration of one more factor: 3) an interval within which motion
leading to diverging $E_{c.m.}$ is allowed, shrinks to the point. This makes
the scenario degenerate and unphysical.

Let us prove the corresponding theorem.

\begin{theorem}
If for a near-subcritical, near-critical or near-ultracritical particle a
proper time needed to reach the horizon is finite for all possible relations
between $v_{c}$ and $v_{t,e}$, then, at least one of three aforementioned
conditions is fulfilled
\end{theorem}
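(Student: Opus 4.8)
The plan is to convert ``infinite $E_{c.m.}$'' into a statement about $\gamma$ and then examine what keeping the scenario physical costs. Since $E_{c.m.}^{2}=m_{1}^{2}+m_{2}^{2}+2m_{1}m_{2}\gamma$, unbounded energy is equivalent to $\gamma\rightarrow\infty$; by Tables \ref{gamma_delta_tab} and \ref{gamma_d_tab} this occurs only as $v_{c}\rightarrow0$, with $\gamma\sim v_{c}^{-d}$, $d>0$. For an NSC, NC or NUC particle the divergent regime is $v_{c}\sim v_{e,t}$ (or $v_{c}\rightarrow v_{t}$), where $v_{e,t}\sim\delta^{\max(1/s,\,2/p)}$, so $v_{c}\rightarrow0$ forces $\delta\rightarrow0$. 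I would take this as the backbone: divergent energy implies $\delta\rightarrow0$, hence $P(0)=|\delta|\rightarrow0$ and the particle degenerates onto its fine-tuned analogue (NSC $\to$ SC, NC $\to$ C, NUC $\to$ UC). The whole theorem then reduces to controlling this limit.

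First I would dispose of the case in which a turning point exists. Then the admissible range is $[0,v_{t}]$ with $v_{t}\sim\delta^{1/s}$ or $\delta^{2/p}$ (see (\ref{vt_subcr}), (\ref{vt_cr}), (\ref{vt_ultracr})). Driving the energy to divergence sends $\delta\rightarrow0$ and therefore $v_{t}\rightarrow0$, so the interval in which the scenario can develop collapses to a point. This is exactly the third condition, and this branch closes with no real computation.

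The substantive case is the one without a turning point, where the range is $[0,\infty]$, so the interval does not shrink and condition 3 is unavailable; I must then force condition 1 or 2. Two ingredients combine here. First, near the horizon $u^{r}=\frac{\sqrt{A}}{N}P$ with $P\rightarrow|\delta|$, so the proper time $\tau\sim\int dv/u^{r}$ carries an overall factor $\delta^{-1}$; for $\tau$ to remain finite as $\delta\rightarrow0$ the near-horizon exponents must obey a sharp bound, the same convergence condition one meets in the fine-tuned limit. Second, in this range the acceleration is governed by the very exponent $n_{1}$ of the usual-particle analysis (Proposition \ref{propos_1} with its two analogues, and Tables \ref{tab_1}--\ref{tab_3}), so finiteness of the force confines the particle to definite $k$-regions with $q$ large enough. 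The cleanest route is then to observe that, in the limit $\delta\rightarrow0$, the configuration becomes the corresponding fine-tuned particle, for which Sec. IX of \cite{ov-zas-prd} already establishes that finite proper time together with divergent energy forces either an infinite force or an irregular horizon; conditions 1 or 2 follow.

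The hard part is precisely this last branch. Making the incompatibility rigorous means carrying the proper-time convergence exponent, the acceleration exponent $n_{1}$ from (\ref{n1_eq_1})--(\ref{n1_eq_3}), and the regularity constraints on $(p,q,k)$ at the same time, and checking them case by case for NSC, NC and NUC across the $k$-regions (\ref{k_reg_q>2})--(\ref{k_reg_q<2}). I must also justify that the reduction to the fine-tuned analogue is legitimate in the $v_{c}\sim v_{e,t}$ regime and not merely when $v_{c}\gg v_{e,t}$; this is exactly what Proposition \ref{propos_1} and its analogues are built to supply. By contrast, the turning-point branch and the implication ``divergent energy $\Rightarrow\delta\rightarrow0$'' are routine, so the exponent incompatibility is where the genuine work lies.
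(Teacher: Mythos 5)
There is a genuine gap in the branch you yourself identify as the substantive one. Your proposed route for the no-turning-point case is to let $\delta\rightarrow0$, declare that the particle ``degenerates onto its fine-tuned analogue,'' and import the conclusion of Sec.~IX of \cite{ov-zas-prd}. But that reduction is only valid in the regime $v_{c}\gg v_{e,t}$ (Section \ref{sec_4_a}); in the regime $v_{c}\sim v_{e,t}$, where the new divergent-energy scenarios actually live, $v_{c}$ and $\delta$ go to zero \emph{together} and the kinematics are genuinely different from those of the fine-tuned analogue. Appendix \ref{sec_prop_time} makes this concrete: for a near-ultracritical particle with $v_{c}\sim v_{e,t}$ the proper time behaves like that of a \emph{critical} particle ($\tau\sim v_{c}^{(2-q)/2}$, finite iff $q<2$), not like that of an ultracritical one, so reducing NUC to UC feeds the wrong convergence condition into the argument. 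Moreover the fine-tuned analogues cannot even reach the horizon in the horizon classes at issue, which is the whole reason near-fine-tuned particles were introduced; so ``the configuration becomes the corresponding fine-tuned particle'' is not a limit of admissible configurations. You also explicitly defer the exponent bookkeeping (``the hard part is precisely this last branch''), so the step that actually proves the theorem is absent.

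The paper's proof closes this branch with a short direct computation that you should carry out instead of the reduction. Finiteness of $\tau$ in the range $v_{c}\sim v_{e,t}$ gives $s<\frac{p-q}{2}+1$ for NSC and $q<2$ for NC/NUC (Appendix \ref{sec_prop_time}); substituting into $n_{1}=\min(s,k)+\frac{q-p}{2}-1$ from (\ref{n1_eq_1}) and using $\min(s,k)\leq s$ yields $n_{1}<0$ in every case, i.e.\ the force diverges (condition~1) --- no case-by-case sweep over the $k$-regions is needed, and Propositions~2 and~3 extend the conclusion to the other ranges of $v_{c}$. Your treatment of the $v_{c}\gg v_{e,t}$ range (defer to \cite{ov-zas-prd}) and your turning-point observation (the interval $[0,v_{t}]$ shrinks as $\delta\rightarrow0$, condition~3) match the paper's handling of those pieces; it is only the central exponent argument that is missing and that cannot be supplied by the fine-tuned reduction as proposed.
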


\begin{proof}
We start with an analysis of near-subcritical, near-critical and
near-ultracritical particles in the $v_{c}\gg v_{e,t}$ case. As was discussed
in the Section \ref{sec_4_a}, near-fine-tuned particles in this range behave
in the same way as corresponding fine-tuned ones. Corresponding part of the
theorem (requiring the horizon's regularity) was proven for such particles in
\cite{ov-zas-prd}.

For cases $v_{c}\ll v_{e,t}$ and $v_{c}\sim v_{e,t}$ proof is different. We
will conduct it by considering different types of particles separately.

Let us start with near-subcritical particles. As one can see from the Appendix
\ref{sec_prop_time}, a proper time for them is finite if $q<p+2$ (in the range
$v_{c}\ll v_{e,t}$) and $s<\frac{p-q}{2}+1$ (in the range $v_{c}\sim v_{e,t}%
$). As we require the proper time to be finite for all relations between
$v_{c}$ and $v_{e,t}$, this means that both these conditions have to hold.

Now let us analyze an acceleration for these particles. We will focus only on
the acceleration for $v_{c}\ll v_{e,t}$ because, as we showed above, if
acceleration is finite in this range it will be finite in other ranges.
Finiteness of the acceleration is defined by the sign of $n_{1}$ which is
given by (\ref{n1_eq_1}):%
\begin{equation}
n_{1}=\min(s,k)+\frac{q-p}{2}-1. \label{n1}%
\end{equation}

First of all, we note that from defining property of $\min(s,k)$ it follows
that $\min(s,k)\leq s,$ and using condition $s<\frac{p-q}{2}+1$ we have
$\min(s,k)<\frac{p-q}{2}+1.$ Property $q<p+2$ tells us that $\frac{p-q}{2}-1$
is positive value, so $n_{1}=\min(s,k)-(\frac{p-q}{2}+1)$ is negative that
means that a force diverges.

Now let us consider near-critical and near-ultracritical particles. For them,
$s=p/2$. As one can note from Appendix \ref{sec_prop_time}, a proper time for
these particles is finite if $q<2.$ As we mentioned, $\min(s,k)\leq s$ that
gives $\min(s,k)\leq p/2.$ Thus we see that $n_{1}$ is given by:
\begin{equation}
n_{1}=\left(  \min(s,k)-\frac{p}{2}\right)  +\left(  \frac{q}{2}-1\right)  .
\end{equation}

The first bracket is $\leq0$ because $\min(s,k)\leq p/2$), the second one is
$<0$ because $q<2$. This makes $n_{1}$ negative and makes a force divergent.

Now let us consider near-overcritical particles. First of all, we note that
according to the result we obtained in the Section \ref{sec_vt_noc}, such
particles always have a turning point and a particle may move only in the
$[0,v_{t}]$ interval. Thus the case $v_{c}\gg v_{e,t}$ is now impossible, so
conclusions from the theorem in the Section IX of \cite{ov-zas-prd} simply
cannot be applied to them.

Then we focus on the $v_{c}\ll v_{e,t}$ and $v_{c}\sim v_{e,t}$ cases. The
conditions of finiteness of the proper time for such particles are the same as
for near-critical and near-ultracritical particles, namely $q<p+2$ and $q<2$
(these conditions, obviously, can be merged to give $q<2$). However, such
particles have $s>p/2$ that causes the main difference between them and the
case of near-critical and near-ultracritical particles. Indeed, the expression
for $n_{1}$ is given by (\ref{n1}). This quantity may be made non-negative if
one chooses $s\geq\frac{p-q}{2}+1$ and $k\geq\frac{p-q}{2}+1$ (or if metric is
static). As $q<2$, the quantity $\frac{p-q}{2}+1>\frac{p}{2}.$ Because of
this, we obtain that $k>\frac{p}{2}$ (that corresponds to region IV) or the
metric is static.

Thus for NOC particles \ the proper time is finite, acceleration is also
finite. Meanwhile, an infinite $E_{c.m.}$ is possible for scenarios in which
such a particle collides with a subcritical, or near-subcritical particle
\ see Table II (line 3) and Table III (line 2). On the first glance, the KC
principle is violated. However, this is not so. Let us remind a reader that,
say, for nonextremal black holes, the BSW effect is possible with small but
nonzero $\delta$ but in such a way that a corresponding near-overcritical
particle moves in a very narrow strip. This was shown in \cite{gr-pav} for the
Kerr metric. For a more general rotating axially symmetric black hole, the
corresponding allowed interval of $N$ is proportional to $\delta$ according to
eq. (18) of Ref. \cite{prd}. In the limit $\delta\rightarrow0$ this interval
shrinks to a point and the process of collision (as well as motion of such a
particle) loses its sense. The KC is not violated since it cannot simply be
applied to a system.

The similar situation happens for NOC particle in a more general case. Indeed,
if $\delta\rightarrow0$, the corresponding $v_{t}\rightarrow0$ according to
(\ref{vt_overcr}) and the allowed coordinate interval $0\leq v_{c}\leq v_{t}$
shrinks to the point. Moreover, this is valid for the proper distance
\begin{equation}
l=\int_{v_{c}}^{v_{t}}\frac{dr}{\sqrt{A}}\text{.}%
\end{equation}
Indeed, as in the case under discussion $\sqrt{A}\sim v^{q/2}$ with $q<2$, the
integral converges and, as the limits of integration shrink, so does $\tau$.

This completes the proof of the theorem.
\end{proof}

\section{Varying ranges of motion of particles\label{ranges}}

Our previous analysis primarily focused on investigating the properties of the
collisional process for a given particle. Although we classified all possible
cases where the BSW phenomenon is possible with forces acting on particles,
there are still several problems with this approach. As is shown in Section
\ref{sec_class_of_part}, not all particles that potentially participate in the
BSW effect can reach freely the collision point from infinity. This means that
such particles can only be created in a narrow region disconnected from
infinity, either due to a quantum creation process in this region or through
multiple scattering \cite{gr-pav}. We consider these scenarios to be exotic
and put them aside. Instead, we are interested in the question of whether it
is possible to change the ranges of motion of a particle by the action of
force in such a way that the particle could reach infinity. To answer this
question, we need to analyze the force acting on a particle for different
types of particles.

\subsection{Near-subcritical particle}

We begin our analysis with near-subcritical particles (as we will show later,
the situation is similar for other particles). For near-subcritical particles,
using equation (\ref{fit_cond}), we find that the particle can reach infinity
only if $\delta>0$ and $X_{s}>0$. Therefore, our task is to determine the
conditions on a force under which we can have $X_{s}>0$. To do this, we
revisit the expression for the radial component of the acceleration
(\ref{at}):%
\begin{equation}
a_{f}^{(r)}\approx\sigma\frac{P}{\sqrt{\mathcal{X}^{2}-N^{2}}}\sqrt
{\frac{A_{q}}{\kappa_{p}}}(X_{s}sv_{c}^{s+\frac{q-p}{2}-1}+L_{H}\omega
_{k}kv_{c}^{k+\frac{q-p}{2}-1}).
\end{equation}

As we are considering near-subcritical particle, for which $s<p/2$ and
$\mathcal{X\ll}N,$ we have that $P\approx\sqrt{\mathcal{X}^{2}-N^{2}}%
\approx|\mathcal{X}|$ (for a proof see Section \ref{iv_b}), thus $\frac
{P}{\sqrt{\mathcal{X}^{2}-N^{2}}}\approx1.$ This allows us to write:%
\begin{equation}
a_{f}^{(r)}\approx\sigma\sqrt{\frac{A_{q}}{\kappa_{p}}}(X_{s}sv_{c}%
^{s+\frac{q-p}{2}-1}+L_{H}\omega_{k}kv_{c}^{k+\frac{q-p}{2}-1}).
\label{ar_x_nsc}%
\end{equation}

As we have previously discussed, there are two terms here and, depending on
whether $s>k,$ $s=k$ or $s<k$ , we obtain different behavior. Our task is to
consider these different cases and determine how the coefficients in the
expansions for the acceleration (\ref{a_expan_eb}) and (\ref{a_expan_es}) in
different ranges of the coordinate $v_{c}$ are related to $X_{s}$. This will
allow us to determine whether we can control the ranges of motion of the
particle or not. It is important to note that in this investigation, we will
only analyze the cases $v_{c}\ll v_{e,t}$ and $v_{c}\sim v_{e,t}$. The case
$v_{c}\gg v_{e,t}$ will not be investigated here because, as we have already
discussed, at such distances, the particle can be considered as fine-tuned
(with negligible influence of the parameter $\delta$) and we return to the
pure BSW effect.

\begin{itemize}
\item $s<k.$

\qquad In this case the 1-st term in (\ref{ar_x_nsc}) is dominant and we have
$a_{f}^{(r)}\approx\sigma\sqrt{\frac{A_{q}}{\kappa_{p}}}X_{s}sv_{c}%
^{s+\frac{q-p}{2}-1}.$ Substituting here (\ref{a_expan_eb}) and
(\ref{a_expan_es}) one obtains:%
\begin{equation}
(a_{f}^{(r)})_{n_{1}}=(a_{f}^{(r)})_{m_{1}}=\sigma\sqrt{\frac{A_{q}}%
{\kappa_{p}}}sX_{s},\text{ \ \ }n_{1}=m_{1}=s+\frac{q-p}{2}-1.
\end{equation}

\qquad This gives us in this case%
\begin{equation}
X_{s}=\sigma\sqrt{\frac{\kappa_{p}}{A_{q}}}(a_{f}^{(r)})_{m_{1}}=\sigma
\sqrt{\frac{\kappa_{p}}{A_{q}}}(a_{f}^{(r)})_{n_{1}}. \label{xs_nsc_s<k}%
\end{equation}

\qquad Therefore, we observe that the sign of $X_{s}$ and thus the possibility
of reaching infinity is determined by the sign of the acceleration. For
ingoing particles ($\sigma=-1$), it is only possible if the force is negative
(attractive). This case also corresponds to static spacetimes.

\item $s=k$

\qquad In this case both terms in (\ref{ar_x_nsc}) are comparable and we have
$a_{f}^{(r)}\approx\sigma\sqrt{\frac{A_{q}}{\kappa_{p}}}(X_{s}sv_{c}%
^{s+\frac{q-p}{2}-1}+L_{H}\omega_{k}kv_{c}^{k+\frac{q-p}{2}-1}).$ \qquad

\qquad Substituting here (\ref{a_expan_eb}) and (\ref{a_expan_es}) one
obtains:%
\begin{equation}
(a_{f}^{(r)})_{n_{1}}=(a_{f}^{(r)})_{m_{1}}=\sigma\sqrt{\frac{A_{q}}%
{\kappa_{p}}}(sX_{s}+L_{H}\omega_{k}k),\text{ \ \ }n_{1}=m_{1}=k+\frac{q-p}%
{2}-1.
\end{equation}

\qquad Reverting it, one gets%
\begin{equation}
X_{s}=\frac{1}{s}\left(  \sqrt{\frac{\kappa_{p}}{A_{q}}}\frac{(a_{f}%
^{(r)})_{n_{1}}}{\sigma}-L_{H}\omega_{k}k\right)  . \label{xs_nsc_s=k}%
\end{equation}

\qquad One can observe that the sign of this expression can be controlled by
the choice of the proper values of acceleration. For example, if the particle
is ingoing and we want to have a positive $X_{s}$, one must have an
acceleration that satisfies the condition $(a_{f}^{(r)})_{n_{1}}<-\sqrt
{\frac{A_{q}}{\kappa_{p}}}L_{H}\omega_{k}k.$

\item $s>k$

\qquad In this case, the second term in equation (\ref{ar_x_nsc}) is dominant,
and we have $a_{f}^{(r)}\approx\sigma\sqrt{\frac{A_{q}}{\kappa_{p}}}%
L_{H}\omega_{k}kv_{c}^{k+\frac{q-p}{2}-1}$. As one can see, in this case, the
acceleration is independent of $X_{s},$ so $X_{s}$ cannot be controlled by
leading terms in the external force.

\qquad Therefore, we observe that the range of motion of a near-subcritical
particle can be controlled by force if $s\leq k$. For these cases, $n_{1}$ is
defined by the first and the second solutions in equation (\ref{s_on_n0}).
\end{itemize}

\subsection{Near-critical, near-ultracritical and near-overcritical particles}

As far as the behavior of acceleration is concerned, all these cases are
similar. Let us start with the case $v_{c}\ll v_{e,t}$. Then, as is shown in
Section \ref{iv_c}, for any particle, $P\approx\sqrt{\mathcal{X}^{2}-N^{2}%
}\approx\delta$ that gives us $\frac{P}{\sqrt{\mathcal{X}^{2}-N^{2}}}\approx
1$. Thus, the expression for acceleration (\ref{at}) is the same as for the
case of a near-subcritical particle (\ref{ar_x_nsc}), so the analysis is the
same in those cases. In the case $v_{c}\sim v_{e,t}$, the quantity $\frac
{P}{\sqrt{\mathcal{X}^{2}-N^{2}}}$ differs from 1. Let us consider the Taylor
expansion for this quantity.
\begin{equation}
\frac{P}{\sqrt{\mathcal{X}^{2}-N^{2}}}\approx1+d_{p/2}v_{c}^{p/2},
\end{equation}

where $d_{p/2}$ is some coefficient. Substituting this to (\ref{at}) one gets:%
\begin{equation}
a_{f}^{(r)}\approx\sigma(1+d_{p/2}v_{c}^{p/2})\sqrt{\frac{A_{q}}{\kappa_{p}}%
}\left(  X_{s}sv_{c}^{s+\frac{q-p}{2}-1}+L_{H}\omega_{k}kv_{c}^{k+\frac
{q-p}{2}-1}\right)  .
\end{equation}

Comparing this with the expansion (\ref{a_expan_es}), we see that the dominant
term is obtained by taking $\frac{P}{\sqrt{\mathcal{X}^{2}-N^{2}}}=1$ that
gives us the same expression as in the case of a near-subcritical particle.
So, we see that for all types of particles, the quantity $X_{s}$ is given by
(\ref{xs_nsc_s<k}) if $s<k$ and by (\ref{xs_nsc_s=k}) if $s=k$.

We observe that for all types of particles, $X_{s}$ is only controllable in
the case of $s\leq k$ (for any range of point of collision). We have already
extensively worked on finding the conditions at which we obtain the 1st
solution (that corresponds to $s<k$) or the 2nd solution (which corresponds to
$s\geq k$) in (\ref{s_on_n0}). The corresponding results are given in Tables
\ref{tab_1},\ref{tab_2} and \ref{tab_3}. These tables are useful in the sense
that, for a given type of particle and a given value of $k$, we can easily
deduce which solution for $s$ we can use among the ones given in
(\ref{s_on_n0}). If it is the first solution, then $X_{s}$ is given by
(\ref{xs_nsc_s<k}). However, if it is the second solution in (\ref{s_on_n0})
and additionally $s=k$, then $X_{s}$ is given by (\ref{xs_nsc_s=k}). If it is
the second solution in (\ref{s_on_n0}) and $s>k$, then $X_{s}$ cannot be controlled.

Now, let us discuss how we can manipulate by the particle parameters to make
it possible to reach infinity.

If the particle is near-ultracritical, it may potentially reach infinity if
$\delta>0$ and $X_{s}>0$ (see (\ref{vt_nuc})). These conditions are the same
as for the case of near-subcritical particles that has already been investigated.

If the particle is near-overcritical, it is impossible to make any such
particle to reach infinity (see (\ref{vc_overcr})).

However, if the particle is near-critical, the corresponding particle may
reach infinity only if $\delta>0$ and $A_{p/2}<X_{p/2}$ (see 3-rd condition in
(\ref{vt_nc})). If $s<k$, using (\ref{xs_nsc_s<k}), this gives us:
$(a_{f}^{(r)})_{n_{1}}<\sigma\sqrt{\frac{A_{q}}{\kappa_{p}}}sA_{p/2}$. If
$s=k$, using (\ref{xs_nsc_s=k}), this gives: $(a_{f}^{(r)})_{n_{1}}%
<\sigma\sqrt{\frac{A_{q}}{\kappa_{p}}}(sA_{p/2}+L_{H}\omega_{k}k)$. We can
summarize all these cases in the Table \ref{tab_4}:

\begin{table}[ptb]%
\begin{tabular}
[c]{|c|c|c|c|}\hline
& $s<k$ & $s=k$ & $s>k$\\\hline
NSC & $(a_{f}^{(r)})_{n_{1}}<0$ & $(a_{f}^{(r)})_{n_{1}}<-\sqrt{\frac{A_{q}%
}{\kappa_{p}}}L_{H}\omega_{k}k$ & Impossible\\\hline
NC & $(a_{f}^{(r)})_{n_{1}}<-\sqrt{\frac{A_{q}}{\kappa_{p}}}sA_{p/2}$ &
$(a_{f}^{(r)})_{n_{1}}<-\sqrt{\frac{A_{q}}{\kappa_{p}}}(sA_{p/2}+L_{H}%
\omega_{k}k)$ & Impossible\\\hline
NUC & $(a_{f}^{(r)})_{n_{1}}<0$ & $(a_{f}^{(r)})_{n_{1}}<-\sqrt{\frac{A_{q}%
}{\kappa_{p}}}L_{H}\omega_{k}k$ & Impossible\\\hline
NOC & Impossible & Impossible & Impossible\\\hline
\end{tabular}
\caption{Conditions which have to hold for an acceleration to make different
particles possible to achieve infinity}%
\label{tab_4}%
\end{table}

\section{Summary and Conclusions \label{summary}}

Thus, we have shown that the BSW effect is possible for a quite rich family of
configurations that include the combination of a type of horizon, that of a
near-fine-tuned particle, and a force. We have presented the classification of
different types of near-fine-tuned particles, generalizing the one presented
for fine-tuned ones (near-subcritical, near-critical, and near-ultracritical)
and adding a new type that is possible only for the case of near-fine-tuned
ones - near-overcritical particles (see Table \ref{class_tab}). Particles of
each type differ in the behavior of the components of the four-velocity near
the horizon that causes different kinematical properties. Specifically, we
have analyzed the allowed ranges of motion for each type of particle (and we
have shown that the corresponding ranges of allowed motion are different) and
the near-horizon limits of the components of the four-velocity, that also turn
out to be different. These results have been used to describe the behavior of
energy in the center of mass frame of two colliding particles. It is important
to note that the investigation of near-fine-tuned particles has opened up a
wider variety of different scenarios for particles collision. In this process,
one near-fine-tuned particle may participate with a fine-tuned one (or a usual
one), or there may be two near-fine-tuned particles. For all of these cases,
we have formulated how the energy of collision would behave as the point of
collision approaches the horizon and have formulated the conditions that must
be met to make the energy divergent (which is the main property of the BSW
effect). The corresponding results have been briefly summarized in Tables
\ref{gamma_delta_tab}-\ref{gamma_d_tab}. Table \ref{gamma_delta_tab}
essentially generalizes the situation considered earlier for non-extremal
black holes \cite{gr-pav}, \cite{prd}.

Furthermore, we have focused on the dynamic properties of particles
participating in the BSW phenomenon. We have analyzed the behavior of the
forces acting on such particles and investigated under which conditions these
forces are finite. The corresponding results are briefly summarized in Tables
\ref{tab_1},\ref{tab_2}, and \ref{tab_3}. In Section \ref{sec_dif_types} and
Table \ref{tab_6} we indicated which types of particles experience finite
force for a given type of horizon. Then we focused on the possibility of
preservation kinematic censorship principle in the case of near-fine-tuned
particles, and we showed that this principle holds for them. An additional
investigation concerns the possibility of changing the ranges of motion of
particles by the action of an external force. This is an important topic in
the analysis of the possibility of having the BSW effect and allowing
particles falling from infinity to achieve the horizon and participate in this
phenomenon. The corresponding results have been briefly summarized in Table
\ref{tab_4}.

As the summary of our work, we present Table \ref{fin_tab} and Table
\ref{fin_tab_2} that show all possible cases when the BSW phenomenon is
possible for near-fine-tuned and fine-tuned particles experiencing an action
of a finite force. These tables, being the union of Table \ref{tab_6} (for
near-fine-tuned ones) and Table VIII from \cite{ov-zas-prd} (for fine-tuned
ones) with Tables \ref{gamma_delta_tab}-\ref{gamma_d_tab} and Table II from
\cite{ov-zas-prd} give the final answer to the question related to the
possibility of having BSW phenomenon with finite forces for near-fine-tuned
and fine-tuned particles. Note an important difference between these tables.
In Table \ref{fin_tab_2} both particles have either $v_{c}\ll v_{e,t}$ or
$v_{c}\gg v_{e,t}.$ This effectively means that such particles behave as usual
(in $v_{c}\ll v_{e,t}$ range) or as fine-tuned ones (in $v_{c}\gg v_{e,t}$
range) (see Sections \ref{sec_4_a} and \ref{iv_c}), so this table effectively
describes the possibility of standard BSW phenomenon. However, one can see
that if one of particles is near-fine-tuned and $v_{c}\sim v_{e,t}$ , we
obtain new scenarios described in Table \ref{fin_tab}. The most obvious case
is that appearance of near-fine-tuned particles made the BSW phenomenon
possible, forces for the non-extremal horizon being finite (that is forbidden
for fine-tuned ones). In other words, in Table IX one of particles has
properties specific for near-fine-tuned particles that happens for $v_{c}\sim
v_{e,t}$ . In Table X any particle behaves either as usual or a fine-tuned one.

One reservation is in order. In the case of the "pure" BSW effect NOC
particles do not exist. However, we can consider them formally if $\delta$ is
large enough. For $\delta>>v_{c}^{p/2}$ such a particle is indistinguishable
from a usual one. This justifies why we mention NOC for particle 1 in the 1st
line of Table \ref{fin_tab} and in \ref{fin_tab_2}.

\begin{table}[ptb]%
\begin{tabular}
[c]{|c|c|c|c|c|c|c|c|}\hline
Type of horizon & \multicolumn{3}{|c}{1-st particle's type and range} &
\multicolumn{3}{|c|}{2-nd particle's type and range} & \\\hline
Non-extremal & NOC & $v_{c}\ll v_{t}^{(1)}$ & $\delta_{1}\gg v_{c}^{p/2}$ &
NOC & $v_{c}\sim v_{t}^{(2)}$ & $\delta_{2}\sim v_{c}^{p/2}$ &
(89)\\\cline{2-4}
& \multicolumn{3}{|c|}{U} &  &  &  & \\\hline
Extremal & NC, NUC, NOC & $v_{c}\ll v_{e,t}^{(1)}$ & $\delta_{1}\gg
v_{c}^{p/2}$ & NC, NUC, NOC & $v_{c}\sim v_{e,t}^{(2)}$ & $\delta_{2}\sim
v_{c}^{p/2}$ & (89)\\\cline{2-4}
& \multicolumn{3}{|c|}{U} &  &  &  & \\\hline
Ultraextremal & NSC & $v_{c}\ll v_{e,t}^{(1)}$ & $\delta_{1}\gg v_{c}^{s}$ &
NSC & $v_{c}\sim v_{e,t}^{(2)}$ & $\delta_{2}\sim v_{c}^{s}$ &
(83)\\\cline{2-4}
& \multicolumn{3}{|c|}{U} &  &  &  & \\\cline{2-2}\cline{2-8}\cline{5-5}
& NSC & $v_{c}\sim v_{e,t}^{(1)}$ & $\delta_{1}\sim v_{c}^{s}$ & NSC &
$v_{c}\sim v_{e,t}^{(2)}$ & $\delta_{2}\sim v_{c}^{s}$ & (91)\\\cline{2-8}%
\cline{6-8}
& NSC & $v_{c}\gg v_{e,t}^{(1)}$ & $\delta_{1}\ll v_{c}^{s}$ & NSC &
$v_{c}\sim v_{e,t}^{(2)}$ & $\delta_{2}\sim v_{c}^{s}$ & (83)\\\cline{2-4}
& \multicolumn{3}{|c|}{SC} &  &  &  & \\\cline{2-4}\cline{2-8}\cline{5-8}
& NSC & $v_{c}\ll v_{e,t}^{(1)}$ & $\delta_{1}\gg v_{c}^{s}$ & NC, NUC, NOC &
$v_{c}\sim v_{e,t}^{(2)}$ & $\delta_{2}\sim v_{c}^{p/2}$ & (89)\\\cline{2-4}
& \multicolumn{3}{|c|}{U} &  &  &  & \\\cline{2-8}
& NSC & $v_{c}\sim v_{e,t}^{(1)}$ & $\delta_{1}\sim v_{c}^{s}$ & NC, NUC,
NOC & $v_{c}\sim v_{e,t}^{(2)}$ & $\delta_{2}\sim v_{c}^{p/2}$ &
(92)\\\cline{2-8}
& NSC & $v_{c}\sim v_{e,t}^{(1)}$ & $\delta_{1}\sim v_{c}^{s}$ & NC, NUC &
$v_{c}\gg v_{e,t}^{(2)}$ & $\delta_{2}\ll v_{c}^{p/2}$ & (85)\\\cline{5-7}
&  &  &  & \multicolumn{3}{|c|}{C, UC} & \\\cline{2-8}\cline{5-7}
& NSC & $v_{c}\gg v_{e,t}^{(1)}$ & $\delta_{1}\ll v_{c}^{s}$ & NC, NUC, NOC &
$v_{c}\sim v_{e,t}^{(2)}$ & $\delta_{2}\sim v_{c}^{p/2}$ & (89)\\\cline{2-4}
& \multicolumn{3}{|c|}{SC} &  &  &  & \\\cline{2-8}\cline{2-8}
& NC, NUC, NOC & $v_{c}\ll v_{e,t}^{(1)}$ & $\delta_{1}\gg v_{c}^{p/2}$ & NC,
NUC, NOC & $v_{c}\sim v_{e,t}^{(2)}$ & $\delta_{2}\sim v_{c}^{p/2}$ &
(89)\\\cline{2-4}\cline{2-4}
& \multicolumn{3}{|c|}{U} &  &  &  & \\\hline
\end{tabular}
\caption{Table showing for which particles and which ranges of their motion
the BSW phenomenon is possible if forces acting on both particles are finite.
This table describes all new possible cases when both particles are
near-fine-tuned or one is fine-tuned while the second particle is
near-fine-tuned. In near-horizon collisions near-fine-tuned particles with
$v_{c}\ll v_{t,e}$ behave similarly to usual ones. If $v_{c}\gg v_{t,e}$,
near-fined particles are similar to fine-tuned ones. The last column displays
equation number describing a corresponding case.}%
\label{fin_tab}%
\end{table}

\begin{table}[ptb]%
\begin{tabular}
[c]{|c|c|c|c|c|c|c|c|}\hline
Type of horizon & \multicolumn{3}{|c|}{1-st particle's type and range} &
\multicolumn{3}{|c|}{2-nd particles's type and range} & \\\hline
Extremal & NC, NUC, NOC & $v_{c}\ll v_{e,t}^{(1)}$ & $\delta_{1}\gg
v_{c}^{p/2}$ & NC, NUC & $v_{c}\gg v_{e,t}^{(2)}$ & $\delta_{2}\ll v_{c}%
^{p/2}$ & 4\\\cline{2-4}
& \multicolumn{3}{|c|}{U} &  &  &  & \\\cline{2-7}
& \multicolumn{3}{|c|}{U} & \multicolumn{3}{|c|}{C, UC} & \\\hline
Ultraextremal & NSC, NC, NUC, NOC & $v_{c}\ll v_{e,t}^{(1)}$ & $\delta_{1}\gg
v_{c}^{p/2}$ & NSC & $v_{c}\gg v_{e,t}^{(2)}$ & $\delta_{2}\ll v_{c}^{s}$ &
2\\\cline{2-4}
& \multicolumn{3}{|c|}{U} &  &  &  & \\\cline{2-7}
& \multicolumn{3}{|c|}{U} & \multicolumn{3}{|c|}{SC} & \\\cline{2-8}
& NSC, NC, NUC, NOC & $v_{c}\gg v_{e,t}^{(1)}$ & $\delta_{1}\ll v_{c}^{p/2}$ &
NSC & $v_{c}\gg v_{e,t}^{(2)}$ & $\delta_{2}\ll v_{c}^{s}$ & 3\\\cline{2-4}
& \multicolumn{3}{|c|}{SC} &  &  &  & \\\cline{2-7}
& \multicolumn{3}{|c|}{SC} & \multicolumn{3}{|c|}{SC} & \\\cline{2-8}
& NSC & $v_{c}\ll v_{e,t}^{(1)}$ & $\delta_{1}\gg v_{c}^{s}$ & NC, NUC &
$v_{c}\gg v_{e,t}^{(2)}$ & $\delta_{2}\ll v_{c}^{p/2}$ & 4\\\cline{2-4}
& \multicolumn{3}{|c|}{U} &  &  &  & \\\cline{2-7}
& \multicolumn{3}{|c|}{U} & \multicolumn{3}{|c|}{C, UC} & \\\cline{2-8}
& NSC & $v_{c}\gg v_{e,t}^{(1)}$ & $\delta_{1}\ll v_{c}^{s}$ & NC, NUC &
$v_{c}\gg v_{e,t}^{(2)}$ & $\delta_{2}\ll v_{c}^{p/2}$ & 4\\\cline{2-4}
& \multicolumn{3}{|c|}{SC} &  &  &  & \\\cline{2-7}
& \multicolumn{3}{|c|}{SC} & \multicolumn{3}{|c|}{C, UC} & \\\hline
\end{tabular}
\caption{Table showing for which particles and which ranges of their motion
BSW phenomenon is possible if forces acting on both particles are finite. This
table describes all the cases with the BSW effect when usual, fine-tuned or
near-fined particles participate. The last column indicates the line number in
Table II from \cite{ov-zas-prd} that describes a corresponding case.}%
\label{fin_tab_2}%
\end{table}

\appendix

\section{Computation of acceleration in FZAMO frame\label{app_acel}}

In this investigation, we are dealing with near-fine-tuned particles that have
a non-zero (but quite small) $P$ on a horizon. As a result, the radial
velocity for such particles is non-zero on a horizon, and to compute the
components of acceleration, we need to choose a suitable frame. We will use a
frame called the FZAMO (Frame of Zero Angular Momentum Observer), which is
attached to a falling particle. This frame is such that the three-velocity of
a particle is zero in this frame. To obtain this frame, we start with a
stationary tetrad (\ref{tetr_1}-\ref{tetr_2}). The three-velocity of a
particle in this frame is given by:%

\begin{equation}
V^{(i)}=-\frac{e_{\mu}^{(i)}u^{\mu}}{e_{\mu}^{(0)}u^{\mu}}=(\frac{\sigma
P}{\mathcal{X}},0,\frac{\mathcal{L}N}{\mathcal{X}\sqrt{g_{\varphi\varphi}}%
})=|v|(\cos\psi,0,\sin\psi),
\end{equation}

where%
\begin{equation}
|v|=\sqrt{1-\frac{N^{2}}{\mathcal{X}^{2}}},\text{ \ \ }\tan\psi=\frac
{N\mathcal{L}}{P\sqrt{g_{\varphi\varphi}}}.
\end{equation}

The FZAMO frame may be obtained if we perform several transformations to a
stationary tetrad:

\begin{itemize}
\item Rotate frame in such a way that new radial tetrad vector is co-directed
with direction of 3-velocity:%
\begin{align}
\widetilde{e}_{\mu}^{(0)}  &  =e_{\mu}^{(0)},\text{ }\ \text{\ }\widetilde
{e}_{\mu}^{(2)}=e_{\mu}^{(2)},\\
\widetilde{e}_{\mu}^{(1)}  &  =e_{\mu}^{(1)}\cos\psi+e_{\mu}^{(3)}\sin
\psi,\text{ \ \ }\widetilde{e}_{\mu}^{(3)}=e_{\mu}^{(3)}\cos\psi-e_{\mu}%
^{(1)}\sin\psi.
\end{align}

\item Perform a boost in a direction of particle's motion:%
\begin{align}
e_{\mu}^{(2)\prime}  &  =\widetilde{e}_{\mu}^{(2)},\text{ }\ \text{\ }e_{\mu
}^{(3)\prime}=\widetilde{e}_{\mu}^{(3)},\\
e_{\mu}^{(0)\prime}  &  =\gamma(\widetilde{e}_{\mu}^{(0)}-|v|\widetilde
{e}_{\mu}^{(1)}),\text{ \ \ }e_{\mu}^{(1)\prime}=\gamma(\widetilde{e}_{\mu
}^{(1)}-|v|\widetilde{e}_{\mu}^{(0)}).
\end{align}

where $\gamma=\frac{1}{\sqrt{1-v^{2}}}=\frac{\mathcal{X}}{N}.$
\end{itemize}

After these actions we obtain tetrad vectors in a form:%

\begin{align}
e_{\mu}^{(0)\prime}  &  =(\mathcal{X}+\omega\mathcal{L},-\frac{P}{\sqrt{A}%
N},0,-\mathcal{L)},\\
e_{\mu}^{(1)\prime}  &  =\frac{1}{\sqrt{\mathcal{X}^{2}-N^{2}}}\left(
-\omega\mathcal{LX-X}^{2}+N^{2},\frac{P\mathcal{X}}{\sqrt{A}N},0,\mathcal{LX}%
\right)  ,\\
e_{\mu}^{(2)\prime}  &  =\sqrt{g_{\theta\theta}}(0,0,1,0),\\
e_{\mu}^{(3)\prime}  &  =\frac{1}{\sqrt{\mathcal{X}^{2}-N^{2}}}\left(  -\omega
P\sqrt{g_{\varphi\varphi}},-\frac{\mathcal{L}N}{\sqrt{g_{\varphi\varphi}A}%
},0,\sqrt{g_{\varphi\varphi}}P\right)  .
\end{align}

One can easily check than indeed in this frame corresponding 3-velocity%
\begin{equation}
V^{(i)^{\prime}}=-\frac{e_{\mu}^{(i)\prime}u^{\mu}}{e_{\mu}^{(0)\prime}u^{\mu
}},
\end{equation}
is zero.

Acceleration components in this frame may be computed from the definition
$a_{F}^{(a)}=e_{\mu}^{(a)\prime}a^{\mu}.$ Using expressions for acceleration
in an OZAMO frame, given in Section VI in \cite{ov-zas-prd}%
\begin{align}
a^{t}  &  =\frac{a_{o}^{(t)}}{N}=\frac{u^{r}}{N^{2}}(\partial_{r}%
\mathcal{X+L}\partial_{r}\omega),\\
a^{r}  &  =\sqrt{A}a_{o}^{(r)}=\mathcal{X}\frac{A}{N^{2}}\left(  \partial
_{r}\mathcal{X+L}\partial_{r}\omega-\frac{N^{2}}{\mathcal{X}}\frac
{\mathcal{L}\partial_{r}\mathcal{L}}{g_{\varphi\varphi}}\right)  ,\\
a^{\varphi}  &  =\frac{a_{o}^{(\varphi)}}{\sqrt{g_{\varphi\varphi}}}+\omega
a^{t}=u^{r}\left(  \frac{\partial_{r}\mathcal{L}}{g_{\varphi\varphi}}%
+\frac{\omega}{N^{2}}(\partial_{r}\mathcal{X+L}\partial_{r}\omega)\right)  ,
\end{align}
we have%
\begin{align}
a_{F}^{(t)}  &  =0,\label{a_f_t}\\
a_{F}^{(r)}  &  =\frac{u^{r}}{\sqrt{\mathcal{X}^{2}-N^{2}}}(\partial
_{r}\mathcal{X+L}\partial_{r}\omega),\label{a_f_r}\\
a_{F}^{(\varphi)}  &  =\frac{1}{\sqrt{\mathcal{X}^{2}-N^{2}}}\frac{\sqrt{A}%
}{N}\left[  (\mathcal{X}^{2}-N^{2})\frac{\partial_{r}\mathcal{L}}%
{\sqrt{g_{\varphi\varphi}}}-\frac{\mathcal{LX}}{\sqrt{g_{\varphi\varphi}}%
}(\partial_{r}\mathcal{X+L}\partial_{r}\omega)\right]  . \label{a_f_phi}%
\end{align}

\section{Proper time\label{sec_prop_time}}

In this section we are going to analyze the proper time of near-fine-tuned
particles for different scenarios of motion of particle. Our main aim is to
find out under which conditions the proper time will be finite and how its
behaviour for near-fine-tuned particle correlates with corresponding
fine-tuned ones. To this end, let us consider separately different ranges of
particle motion.

\subsection{$v_{c}\gg v_{e,t}$}

In this case, as was shown in Section \ref{sec_4_a}, behaviour of all physical
quantities is the same as for fine-tuned particles of the same type. Thus the
proper time given by%
\begin{equation}
\tau=\int\frac{dr}{u^{r}}+C,
\end{equation}
in this range of coordinates can be found in \cite{ov-zas-prd}. Here $C$ is a
constant of integration, which we will omit in a further analysis because we
are interested in a near-horizon behaviour of the proper time, which is
independent on $C.$

\subsection{$v_{c}\ll v_{e,t}$}

In this case we can refer to the Section \ref{iv_c} and simply take the limit
$v_{c}\longrightarrow0$ while keeping terms with $\delta$ in expressions for
the $P$ and $\mathcal{X}$. Thus we can write:%
\begin{equation}
P=\sqrt{\mathcal{X}^{2}-\left(  1+\frac{L^{2}}{g_{\varphi\varphi}}\right)
N^{2}}\approx\delta.
\end{equation}

The radial velocity in this case is given by%
\begin{equation}
u^{r}=\frac{\sqrt{A}}{N}P\approx\sqrt{\frac{A_{q}}{\kappa_{p}}}v_{c}%
^{\frac{q-p}{2}}\delta.
\end{equation}

The proper time is given by%

\begin{equation}
\tau=\int\frac{dr}{u^{r}}\approx\sqrt{\frac{\kappa_{p}}{A_{q}}}\frac
{v_{c}^{\frac{p-q}{2}+1}}{\delta(\frac{p-q}{2}+1)}.
\end{equation}

We see that in this range of coordinates $\tau\sim v_{c}^{\frac{p-q}{2}+1}$
and this result does not depend on a type of corresponding near-fine-tuned
particle. Comparing this with Table 1 in \cite{ov-zas-prd}, we see, that the
proper time in this range is the same as for usual particles. This is not
surprising, because, as was discussed in Section \ref{iv_c}, in range
$v_{c}\ll v_{e,t}$ all kinematic properties of all near-fine-tuned particles
are the same as for usual ones. The proper time is finite if $q<p+2.$

\subsection{$v_{c}\sim v_{e,t}$}

In this range of particle's motion expression for proper time cannot be
explicitly integrated. However we are rather interested in an asymptotical
behaviour of the proper time then in an exact expression. For this we can use
that in $v_{c}\sim v_{t}$ range holds $P\sim\delta$ (see Section \ref{iv_b}).
This allows us to write:%

\begin{equation}
\tau=\int\frac{dr}{u^{r}}\approx\sqrt{\frac{\kappa_{p}}{A_{q}}}\int\frac
{dr}{v^{\frac{q-p}{2}}P}\sim\frac{1}{\delta}\int\frac{dr}{v^{\frac{q-p}{2}}%
}\sim\frac{v_{c}^{\frac{p-q}{2}+1}}{\delta}.
\end{equation}

Note that this expression depends on both $v_{c}$ and $\delta.$ However we can
relate these quantities as far as we consider $v_{c}\sim v_{e,t}$ case. For
near-subcritical particles one can use that $v_{e,t}\sim\delta^{1/s}$
according to (\ref{vt_subcr}), (\ref{vc_subcr}), or, inverting $\delta\sim
v_{c}^{s},$ we have:%
\begin{equation}
\tau\sim v_{c}^{\frac{p-q}{2}+1-s}.
\end{equation}

As for near-subcritical particles $0<s<p/2,$ we see that proper time behaves
as $\tau\sim v_{c}^{-\alpha},$ where $\frac{q-p-2}{2}<\alpha<\frac{q-2}{2}.$
Comparing this result with Table 1 in \cite{ov-zas-prd}, we see that the
proper time for near-subcritical particles in region \ $v_{c}\sim v_{e,t}$
behaves in the same way as for corresponding subcritical particles. The proper
time for such particles is finite if $s<\frac{p-q}{2}+1.$

For near-critical, near-ultracritical and near-overcritical ones situation
differs. For them holds $\delta\sim v_{c}^{p/2},$ what allows us to write:%
\begin{equation}
\tau\sim v_{c}^{\frac{2-q}{2}}.
\end{equation}

From this we see that $\tau\sim v_{c}^{-\alpha}$ with $\alpha=\frac{q-2}{2}.$
Comparing this with Table I in \cite{ov-zas-prd}, we see, that in range
$v_{c}\sim v_{t}$ behaviour of the proper time for near-critical particles is
the same as for critical ones. However, for near-ultracritical behaviour is
not the same as for ultracritical ones and is the same as for critical ones
(this also concerns near-overcritical ones). The proper time is finite if
$q<2.$


\begin{thebibliography}{99}                                                                                               %


\bibitem {ban}M. Ba\~{n}ados, J. Silk and S.M. West, Kerr Black Holes as
Particle Accelerators to Arbitrarily High Energy, Phys. Rev. Lett.
\textbf{103} (2009) 111102 [arXiv:0909.0169].

\bibitem {pir1}T. Piran, J. Katz, and J. Shaham, High efficiency of the
Penrose mechanism for particle collision, Astrophys. J. 196, L107 (1975).

\bibitem {pir2}T. Piran and J. Shaham, Production of gamma-ray bursts near
rapidly rotating accreting black holes, Astrophys. J. 214, 268 (1977).

\bibitem {pir3}T. Piran and J. Shanam, Upper bounds on collisional Penrose
processes near rotating black hole horizons, Phys. Rev. D 16, 1615 (1977).

\bibitem {berti}E. Berti, V. Cardoso, L. Gualtieri, F. Pretorius, U. Sperhake,
Comment on "Kerr Black Holes as Particle Accelerators to Arbitrarily High
Energy, Phys. Rev.Lett. \textbf{103}, 239001 (2009), [arXiv:0911.2243].

\bibitem {ted}T. Jacobson, T.P. Sotiriou,\textit{\ }Spinning Black Holes as
Particle Accelerators, Phys. Rev. Lett. \textbf{104}, 021101 (2010), [arXiv:0911.3363].

\bibitem {hod}S. Hod, Upper bound on the center-of-mass energy of the
collisional Penrose process, Phys. Lett. \textbf{B} 759, 593 (2016) [arXiv:1609.06717].

\bibitem {lib}S. Liberati, C. Pfeiferc, and J. Relancio, Exploring black holes
as particle accelerators: hoop-radius, target particles and escaping
conditions, J. Cosmol. Astropart. Phys. \textbf{05} (2022) 023 [arXiv:2106.01385].

\bibitem {cens}Yu. V. Pavlov and O. B. Zaslavskii, Kinematic censorship as a
constraint on allowed scenarios of high energy particle collisions, Grav.
Cosmol. \textbf{25} (2019) 390, [arXiv:1805.07649].

\bibitem {gr-pav}A.A. Grib and Yu.V. Pavlov, On particles collisions in the
vicinity of rotating black holes, JETP Letters \textbf{29}, 125 (2010).

\bibitem {tz13}I. V. Tanatarov and O. B. Zaslavskii, Ba\~{n}ados-Silk-West
effect with nongeodesic particles: extremal horizons, Phys. Rev.
D\textbf{\ 86} (2012) 044019 [arXiv:1307.0034].

\bibitem {tz14}I. V. Tanatarov and O. B. Zaslavskii, Ba\~{n}ados-Silk-West
effect with nongeodesic particles: nonextremal horizons, Phys. Rev. D 90,
067502 (2014) [arXiv:1407.7463].

\bibitem {prd}O.B. Zaslavskii, Acceleration of particles as universal property
of rotating black holes, Phys. Rev.\textit{\ }D\textbf{\ 82} (2010) 083004 [arXiv:1007.3678]

\bibitem {zas20}O.B.Zaslavskii, Schwarzschild black hole as a particle
accelerator. Jetp Lett. 111, 260--263 (2020)

\bibitem {ov-zas-prd}H.\thinspace V. Ovcharenko and O.\thinspace B.
Zaslavskii, Ba\~{n}ados-Silk-West effect with finite forces near different
types of horizons: General classification of scenarios, Phys. Rev. D 108,
064029 (2023).

\bibitem {ov-zas-gc}H.V.Ovcharenko and O.B.Zaslavskii, Axially Symmetric
Rotating Black Holes, Boyer--Lindquist Coordinates, and Regularity Conditions
on Horizons, Gravit. Cosmol. 29, 269--282 (2023)

\bibitem {72}J. M. Bardeen, W. H. Press, and S. A. Teukolsky, Rotating black
holes: locally nonrotating frames, energy extraction, and scalar synchrotron
radiation, Astrophys. J. \textbf{178}, 347 (1972).
\end{thebibliography}
\end{document}